\documentclass[final,x11names,hidelinks]{lmcs}
\pdfoutput=1

% LMCS Layouting Macros
\usepackage{lastpage}
\lmcsdoi{17}{1}{22}
\lmcsheading{}{\pageref{LastPage}}{}{}%
{Dec.~17,~2019}{Mar.~18,~2021}{}

\usepackage[utf8]{inputenc}

\keywords{Register Automata, Synthesis, Data words, Transducers}

\usepackage{microtype}%if unwanted, comment out or use option "draft"

% Symbols
\usepackage{bbm}
\usepackage{amssymb}
\usepackage{stmaryrd}
% Paired delimiters
\usepackage{mathtools}
\usepackage{standalone}
\usepackage{subcaption}
\usepackage{multicol}
\usepackage{multirow}
\usepackage{booktabs}
\usepackage{xspace}
\usepackage{tikz}
\usetikzlibrary{arrows, automata, positioning}

\title{Synthesis of Data Word Transducers}
\titlecomment{This paper is the journal version
  of~\cite{DBLP:conf/concur/ExibardFR19}.
  ACM Classification:
Theory of computation $\rightarrow$ Logic and verification;
Theory of computation $\rightarrow$ Automata over infinite objects;
Theory of computation $\rightarrow$ Transducers.}

\author[L. Exibard]{L\'eo Exibard\rsuper{{a,b}}}
\author[E. Filiot]{Emmanuel Filiot\rsuper{b}}
\author[P.-A. Reynier]{Pierre-Alain Reynier\rsuper{a}}

\thanks{
  L.~Exibard is funded by a FRIA fellowship from the F.R.S.-FNRS\@.
  E.~Filiot is a research associate of F.R.S.-FNRS\@. He is supported by the
 ARC Project Transform F\'ed\'eration Wallonie-Bruxelles and the FNRS CDR
 J013116F and MIS F451019F projects.
  P.-A.~Reynier is partly funded by the DeLTA project (ANR--16--CE40--0007).
}

\address{\lsuper{b}Universit\'e libre de Bruxelles, Brussels, Belgium}
\email{\{leo.exibard,efiliot\}@ulb.ac.be}
\address{\lsuper{b}Aix Marseille Univ, Universit\'e de Toulon, CNRS, LIS, Marseille, France}
\email{pierre-alain.reynier@lis-lab.fr}

% \subjclass{\ccsdesc[500]{Theory of computation~Logic~Logic and verification},
% \ccsdesc[500]{Theory of computation~Formal languages~Automata
% extensions~Transducers}}% mandatory: Please choose ACM 2012 classifications from https://www.acm.org/publications/class-2012 or https://dl.acm.org/ccs/ccs_flat.cfm . E.g., cite as "General and reference $\rightarrow$ General literature" or \ccsdesc[100]{General and reference~General literature}.

% \authorrunning{L. Exibard, E. Filiot and P.-A. Reynier}%mandatory. First: Use abbreviated first/middle names. Second (only in severe cases): Use first author plus 'et al.'

%%% Our macros %%%
\newcommand{\N}{\mathbb{N}}
\newcommand{\inp}{\mathbbm{i}}
\newcommand{\outp}{\mathbbm{o}}
\newcommand{\dom}{\mathrm{dom}}
\newcommand{\data}{\mathcal{D}}
\newcommand{\lab}{\textsf{lab}}
\newcommand{\dt}{\textsf{dt}}
\newcommand{\states}{\textsf{states}}
\newcommand{\datawords}{\textsf{DW}}
\newcommand{\findatawords}{\textsf{DW}_{\! f}}
\newcommand{\relwords}{\textsf{RW}}

\newcommand{\real}{\textsc{Real}}
\newcommand{\unif}{\textsc{Unif}}
\newcommand{\allDiff}{\textsc{AllDiff}}
\newcommand{\projin}{\textsf{inp}}
\newcommand{\projout}{\textsf{out}}

% Interleaving of data words
\newcommand{\interl}[2]{\ensuremath{\langle #1, #2 \rangle}}

\newcommand{\dz}{\textsf{d}_0}

\newcommand{\dra}{\textnormal{\textsf{DRA}}\xspace}
\newcommand{\draa}{\textnormal{$\textsf{DRA}_{\textsf{ido}}$}\xspace}
\newcommand{\nra}{\textnormal{\textsf{NRA}}\xspace}
\newcommand{\ura}{\textnormal{\textsf{URA}}\xspace}
\newcommand{\nratf}{\textnormal{$\textsf{NRA}_{\textsf{tf}}$}\xspace}
\newcommand{\tf}{\textnormal{$\textsf{tf}$}\xspace}

\newcommand{\srt}{\textnormal{\textsf{RT}}\xspace}
\newcommand{\rt}{\textnormal{\textsf{RT}}\xspace}
\newcommand{\ra}{\textnormal{\textsf{RA}}\xspace}

\newcommand{\tst}{\phi}
\newcommand{\tste}{E}
\newcommand{\asgn}{\textnormal{\textsf{asgn}}}
\newcommand{\Tst}{\textnormal{\textsf{Tst}}}
\newcommand{\Asgn}{\textnormal{\textsf{Asgn}}}
\newcommand{\In}{\textnormal{\textsf{In}}}
\newcommand{\Out}{\textnormal{\textsf{Out}}}
\newcommand{\Comp}{\textnormal{\textsf{Comp}}}
\newcommand{\orig}{o}
\newcommand{\tmm}{\mathcal{M}\xspace}
%Req/Grant
\newcommand{\req}{\textnormal{\textsf{req}}}
\newcommand{\grt}{\textnormal{\textsf{grt}}}
\newcommand{\idle}{\textnormal{\textsf{idle}}}

\DeclarePairedDelimiter\size{\lvert}{\rvert}
\DeclarePairedDelimiter\eqClass{[}{]}  \newcommand{\inalph}{\Sigma_\inp}
\newcommand{\outalph}{\Sigma_\outp}

\newcommand{\id}{\mathrm{id}} 

% Better-looking xrightarrow
\usepackage{calc} \newcommand\myxrightarrow[2][]{
  \xrightarrow[{\raisebox{1.25ex-\heightof{$\scriptstyle#1$}}[0pt]{$\scriptstyle#1$}}]{#2}%
}

\begin{document}

\begin{abstract}
  In reactive synthesis, the goal is to automatically generate an implementation
  from a specification of the reactive and non-terminating input/output
  behaviours of a system. Specifications are usually modelled as logical
  formulae or automata over infinite sequences of signals ($\omega$-words),
  while implementations are represented as transducers. In the classical
  setting, the set of signals is assumed to be finite. In this paper, we
  consider data $\omega$-words instead, i.e., words over an infinite alphabet.
  In this context, we study specifications and implementations respectively
  given as automata and transducers extended with a finite set of registers. We
  consider different instances, depending on whether the specification is
  nondeterministic, universal or deterministic, and depending on whether the
  number of registers of the implementation is given or not.

  In the unbounded setting, we show undecidability for both universal and
  nondeterministic specifications, while decidability is recovered in the
  deterministic case. In the bounded setting, undecidability still holds for
  nondeterministic specifications, but can be recovered by disallowing tests
  over input data. The generic technique we use to show the latter result allows
  us to reprove some known result, namely decidability of bounded synthesis for
  universal specifications.
\end{abstract}

\maketitle

\section*{Introduction}

\emph{Reactive synthesis} is an active research domain whose goal is to design
algorithmic methods able to automatically construct a reactive system from a
specification of its admissible behaviours. Such systems are notoriously
difficult to design correctly, and the main appealing idea of synthesis is to
automatically generate systems that are \emph{correct by construction}. Reactive systems
are non-terminating systems that continuously interact with the environment in
which they are executed, through input and output \emph{signals}. At each time
step, the system receives an input signal from a set $\textsf{In}$ and produces
an output signal from a set $\textsf{Out}$. An execution is then modelled as an
infinite sequence alternating between input and output signals, i.e., an
$\omega$-word in ${(\textsf{In} \cdot \textsf{Out})}^\omega$. Classically, the sets
$\textsf{In}$ and $\textsf{Out}$ are \emph{assumed to be finite} and reactive
systems are modelled as (sequential) \emph{transducers}. Transducers are simple
finite-state machines with transitions of type $\textsf{States}\times
\textsf{In}\rightarrow \textsf{States}\times \textsf{Out}$, which, at any state,
can process any input signal and deterministically produce some output signal,
while possibly moving, again deterministically, to a new state. A
\emph{specification} is then a language $S\subseteq
{(\textsf{In} \cdot \textsf{Out})}^\omega$ telling which are the acceptable behaviours
of the system. It is also classically represented as an automaton, or as a
logical formula then converted into an automaton. Some regular specifications
may not be realisable by any transducer, and the \emph{realisability problem}
asks, given a regular specification $S$, whether there exists a transducer $T$
whose behaviours satisfy $S$ (i.e., are included in $S$). The synthesis problem asks
to construct $T$ if it exists.

A typical example of reactive system is that of a server granting requests from
a \emph{finite} set of clients $C$. Requests are represented as the set of input
signals $\textsf{In} = \{ (r,i)\mid i\in C\} \cup \{ \textsf{idle}\} $ (client
$i$ requests the resource) and grants by the set of output signals
$\textsf{Out} = \{ (g,i)\mid i\in C\} \cup \{ \textsf{idle}\} $ (server grants
client $i$'s request). A typical constraint to be imposed on such a system is
that every request is eventually granted, which can be represented by the LTL
formula $\bigwedge_{i\in C}G((r,i)\rightarrow F (g,i))$. The latter
specification is realisable for instance by the transducer which outputs $(g,i)$
whenever it reads $(r,i)$ and $\textsf{idle}$ whenever it reads $\textsf{idle}$.

It is well-known that the realisability problem is decidable for
$\omega$-regular specifications. It is \textsc{ExpTime}-complete when
represented by parity
automata~\cite{BuLa69,PnuRos:89,DBLP:conf/icalp/FiliotJLW16}; and
\textsc{2ExpTime}-complete for LTL specifications~\cite{PnuRos:89}. Such
positive results have triggered a recent and very active research interest in
efficient symbolic methods and tools for reactive synthesis (see
e.g.~\cite{Bloem2018}). Extensions of this classical setting have been proposed
to capture more realistic scenarios~\cite{Bloem2018}. However, only a few works
have considered infinite sets of input and output signals. In the previous
example, the number of clients is assumed to be finite, and small. To the best
of our knowledge, existing synthesis tools do not handle large alphabets,
although it is more realistic to consider an unbounded (infinite) set of client
identifiers, e.g.\ $C = \N$. The goal of this paper is to investigate how
reactive synthesis can be extended to handle infinite sets of signals.

\emph{Data words} are infinite sequences $x_1x_2\dots$ of labelled data, i.e.,
pairs $(\sigma,d)$ with $\sigma$ a label from a finite alphabet and $d$ is a \emph{data} from a countably infinite alphabet $\data$.
They can naturally model executions of reactive systems over an infinite set of
signals. Among other models, \emph{register automata} are one of the main
extensions of automata recognising languages of data
words~\cite{Kaminski:1994:FA:194527.194534,DBLP:conf/csl/Segoufin06}. They can
use a finite set of registers in which to store data that are read, and to
compare the current data with the content of some of the registers (in this
paper, we allow comparison of equality). Likewise, transducers can be extended
to \emph{register transducers} as a model of reactive systems over data words: a
register transducer is equipped with a set of registers, and when reading an
input labelled data $(\sigma,d)$, it can test $d$ for equality with the content
of some of its registers, and depending on the result of this test,
deterministically assign some of its registers to $d$ and output a finite label
$\beta$ along with the content of one of its registers. Its executions are
then data words alternating between input and output labelled data, and register
automata can thus be used to represent specifications, as languages of such data
words.

\subsubsection*{Contributions} We consider two classical semantics for register
automata, nondeterministic and universal, both with a parity acceptance
condition, which give two classes of register automata respectively denoted \nra
and \ura. We study the parity acceptance condition because it can express the
other classical acceptance conditions; e.g., B\"uchi and co-B\"uchi can be expressed
with a 2-colours parity condition. Since \nra are not closed under complement
(already over finite data words), \nra and \ura define incomparable classes of
specifications. The request-grant specification, as defined above, can be
generalised to an infinite number of clients, and it is then expressible by an
\ura~\cite{DBLP:conf/atva/KhalimovMB18}: whenever a request is made by client
$i$ (labelled data $(r,i)$), universally trigger a run which stores $i$ in some
register and verifies that the labelled data $(g,i)$ eventually occurs in the
data word. In contrast, no \nra can define it. On the other hand, consider the
specification $S_0$: ``all input data but one are copied on the output, the
missing one being replaced by some data which occurred before it'', modelled as
the set of data sequences $d_1d_1d_2d_2\dots d_i d_j d_{i+1} d_{i+1}\dots$ for
all $i\geq 0$ and $j< i$ (finite labels are irrelevant and not represented).
$S_0$ is not definable by any \ura, as it would require to guess $j$, which can
be arbitrarily smaller than $i$, but it is expressible by some \nra making this
guess.

However, we show (unsurprisingly) that the realisability problem by register
transducers of specifications defined by \nra is undecidable. The same negative
result also holds for \ura, solving an open question raised
in~\cite{DBLP:conf/atva/KhalimovMB18}. On the positive side, we show that
decidability is recovered for deterministic (parity) register automata (\dra)
in which the output is driven by the input (meaning that it is contained
in some register). We call this class the \dra with input-driven outputs,
denoted by \draa.
One of the difficulties of register transducer synthesis is that the number of
registers needed to realise the specification is, a priori, unbounded with
regards to the number of registers of the specification. We show it is in fact
not the case for \draa: any specification expressed as a \draa with $r$ registers
is realisable by a register transducer iff it is realisable by a transducer with
$r$ registers.
%\leo{We are a bit overselling it, given the importance of the
%  technical restriction we need. It might be better to present it in the
%  vocabulary of transducers (because then it somewhat makes sense to consider
%  such model).}

A way to obtain decidability is to fix a bound $k$ and to target register
transducers with at most $k$ registers. This setting is called \emph{bounded
  synthesis} in~\cite{DBLP:conf/atva/KhalimovMB18}, which establishes that
bounded synthesis is decidable in \textsc{2ExpTime} for \ura. We show that
unfortunately, bounded synthesis is still undecidable for \nra specifications (even when targetting implementations with a single register).
To recover decidability for \nra, we disallow equality tests on the input data
and add a syntactic requirement which entails that on any accepted word, each
output data is the content of some register which has been assigned an input
data occurring before. This defines a subclass of \nra that we call (input)
\emph{test-free} \nra (\nratf). \nratf can express how output data can be
obtained from input data (by copying, moving or duplicating them), although they
do not have the whole power of register automata on the input nor the output
side. Note that the specification $S_0$ given before is \nratf-definable. To
show that bounded synthesis is decidable for \nratf, we establish a generic
transfer property characterising realisable data word specifications in terms of
realisability of corresponding specifications over a finite alphabet, thus
reducing to the well-known synthesis problem over a finite alphabet. Such
property also allows us to reprove the result
of~\cite{DBLP:conf/atva/KhalimovMB18}, with a rather short proof based on
standard results from the theory of register automata, indicating that it might
allow to establish decidability for other classes of data specifications. Our
results are summarised in Table~\ref{tbl:results}.

\begin{table}
  \makebox[\textwidth][c]{%
    \begin{tabular}{ccccc}
                                   & \draa                          & \nra                          & \ura                                                               & \nratf \\[2mm]
\toprule
Bounded & \textsc{2ExpTime}         & Undecidable\ ($k\geq 1$)      & \textsc{2ExpTime}                                                  & \textsc{2ExpTime} \\
           Synthesis                        & (Thm.~\ref{thm:boundedURA})   & (Thm.~\ref{thm:boundedNRA})   & (\cite{DBLP:conf/atva/KhalimovMB18} and Thm.~\ref{thm:boundedURA}) & (Thm.~\ref{thm:boundedNRAtf}) \\[2mm]
     General      & \textsc{ExpTime-c}              & Undecidable                   & Undecidable                                                        & \multirow{2}{*}{Open} \\
        Case            & (Thm.~\ref{thm:unboundedDRA}) & (Thm.~\ref{thm:unboundedNRA}) & (Thm.~\ref{thm:unboundedURA})                                      &
    \end{tabular}%
    }
  \caption{Decidability status of the problems studied. As observed in Corollary~\ref{cor:boundedDRA}, the bounded synthesis for \draa is in \textsc{ExpTime}
if the target number of registers is larger than or equal to the number of registers of the specification.}\label{tbl:results}
\end{table}

\subsubsection*{Related Work} As already mentioned, bounded synthesis of register
transducers is considered in~\cite{DBLP:conf/atva/KhalimovMB18} where it is
shown to be decidable for \ura. We reprove this result in a shorter way. Our
proof bears some similarities with that of~\cite{DBLP:conf/atva/KhalimovMB18},
but it seems that our formulation benefits more from the use of existing
results. The technique is also more generic and we instantiate it to \nratf.
\nratf correspond to the one-way, nondeterministic version of the expressive
transducer model of~\cite{DBLP:conf/fossacs/Durand-Gasselin16}, which however
does not consider the synthesis problem.

The synthesis problem over infinite alphabets is also
considered in~\cite{Ehlers:2014:SI:2961203.2961226}, in which data
represent identifiers and specifications (given as particular
automata close to register automata) can depend on equality between
identifiers. However, the class of implementations is very expressive:
it allows for unbounded memory
through a queue data structure. The synthesis problem is shown to be
undecidable and a sound but incomplete algorithm is given.

Finally, classical reactive synthesis has strong connections with game theory
on finite graphs. Some extension of games to infinite graphs whose
vertices are valuations of variables in an infinite data domain have been
considered in~\cite{DBLP:conf/lics/FigueiraP18}. Such games are shown
to be undecidable and a decidable restriction is proposed, which
however does not seem to match our context.

\section{Data Words and Register Automata}

For a (possibly infinite) set $S$, we denote by $S^\omega$ the set
% $S^*$ (resp. $S^\omega$) the set
of infinite  words over this alphabet.
% of finite (resp. infinite) words over this alphabet, and we let $S^\infty = S^*
% \cup S^\omega$. For $u \in S^\infty$, we let $\size{u}$ its length, with the
% convention $\size{u} = \infty$ if $u\in S^\omega$.
For $1\leq i\leq j$, we let $u[i{:}j] = u_i u_{i+1} \dots u_j$
% , $u[i{:}] = u_i u_{i+1} \dots$ the
% prefix of $u$ starting at the $i$th letter
and $u[i] = u[i{:}i]$ the $i$th
letter of $u$.
% For $u,v \in S^\omega$, we
% write $u \preceq v$ when $u$ is a \emph{prefix} of $v$.
% , ie when $\exists w \in S^\infty, v = u \cdot w$.
For $u,v \in S^\omega$, we define their
% For $u,v \in S^\infty$ s.t. $\size{u} = \size{v}$, we define their
\emph{interleaving} $\interl{u}{v} = u[1]v[1]u[2]v[2]\dots$

\subsection{Data Words}
Let $\Sigma$ be a finite alphabet and
$\mathcal{D}$ a
countably
infinite set, denoting, all over this paper,
% \highlight[id=LE,comment={I don't know which is
%   better, whether to make $\Sigma$ and $\data$ global, or to make it part of the
% definition of data words and RA. The problem with having them global is that in
% the following, we rather use $\inalph$ and $\outalph$, which are a priori different from $\Sigma$.}]{all over this paper}
a set of
elements called \emph{data}. We also distinguish an (arbitrary) data value
$\dz\in\data$. Given a set $R$, let
$\tau_0^R$ be the constant function defined by $\tau_0^R(r)=\dz$ for all
$r\in R$. A \emph{labelled data} (or l-data for
short) is a pair $x = (\sigma,d)\in\Sigma\times \data$, where $\sigma$
is the \emph{label} and $d$ the \emph{data}. We
define the projections $\lab(x) = \sigma$ and $\dt(x) = d$.
A \emph{data word} over $\Sigma$ and
$\mathcal{D}$ is an infinite sequence of labelled data, i.e.\ a word
$w\in {(\Sigma\times \mathcal{D})}^\omega$. We extend the projections
$\lab$ and $\dt$ to data words naturally,
i.e.\ $\lab(w)\in\Sigma^\omega$ and $\dt(w) \in \data^\omega$. We
denote the set of data words over $\Sigma$ and $\data$ by $\datawords(\Sigma,\data)$
% \highlight[id=LE,comment={Should we keep this notation?}]{$\datawords(\Sigma,\data)$}
($\datawords$ when
clear from the context). A \emph{data word language} is a subset
$L\subseteq \datawords(\Sigma,\data)$. Note that in this paper, data
words are infinite, otherwise they are called \emph{finite data
  words}, and we denote by $\findatawords(\Sigma,\data)$ the set of
finite data words.

\subsection{Register Automata}
Register automata are automata
recognising data word languages. They were first
introduced in~\cite{Kaminski:1994:FA:194527.194534} as finite-memory
automata. Here, we define them in a spirit close
to~\cite{journals/jcss/LibkinTV15}, but over infinite words,
with a parity acceptance condition. The current data can be
compared for equality with the register contents via tests. Our tests
are symbolic and defined via Boolean formulas of the following
form. Given $R$ a set of registers, a \emph{test} is a formula $\phi$
satisfying the following syntax:
\[
\phi\ ::=\ \top\mid \bot\mid r^=\mid r^{\neq} \mid \phi\wedge \phi \mid \phi\vee
\phi\mid \neg \phi
\]
where $r\in R$. Given a valuation $\tau : R\rightarrow \data$, a
test $\phi$ and a data $d$, we denote by $\tau,d\models \phi$ the
satisfiability of $\phi$ by $d$ in valuation $\tau$, defined as $\tau,d\models r^=$
if $\tau(r) = d$ and $\tau,d\models r^{\neq}$ if $\tau(r)\neq d$. The
Boolean combinators behave as usual. We denote by $\Tst_R$ the set of
(symbolic) tests over $R$.

\begin{defi}
  A \emph{register automaton} (\ra) is a tuple $\mathcal{A} = (\Sigma,
  \mathcal{D}, Q, q_0, \delta, R, c)$, where:
  \begin{itemize}
  \item $\Sigma$ is a finite alphabet of \emph{labels}, $\mathcal{D}$ is an infinite alphabet of \emph{data}
  \item $Q$ is a finite set of states and $q_0 \in Q$ is the initial state
    % and $F \subseteq Q$ is the set of final states
  \item $R$ is a finite set of \emph{registers}. We denote $\Asgn_R = 2^R$.

  \item $c : Q \rightarrow \{1, \dots, d\}$, where $d \in \N$ is the number of
    \emph{priorities}, is the \emph{colouring function}, used to define the
    acceptance condition
  % \item $\tau_0 : R \rightarrow \mathcal{D}$ is the initial register
    % assignment.\manu{I have removed the value bottom, is it necessary?}
  \item $\delta \subseteq Q \times \Sigma \times \Tst_R \times \Asgn_R \times Q$ is a set of transitions.
  \end{itemize}
\end{defi}

\noindent
  A transition $(q, \sigma, \tst, \asgn, q')$ is also written $q \myxrightarrow[\mathcal{A}]{\sigma, \tst, \asgn} q'$. We may omit $\mathcal{A}$ in the latter notation. Intuitively such transition means that on input ($\sigma, d$) in state $q$ the automaton:
    \begin{enumerate}
    \item checks that $\tst$ is satisfied by the current register
      contents and the current data
    \item assigns $d$ to all the registers in $\asgn$ ($\asgn$ might be empty)
    \item transitions to state $q'$.
    \end{enumerate}

\noindent
    $\mathcal{A}$ is said to be \emph{deterministic} if the tests are
    mutually exclusive, i.e., for any two distinct transitions of the form $q
    \xrightarrow{\sigma, \tst, \asgn} q'$ and $q \xrightarrow{\sigma',
      \tst', \asgn'} q''$, then either
    $\sigma\neq \sigma'$ or $\tst\wedge \tst'$ is not satisfiable. The
    automaton $\mathcal{A}$ is said to be \emph{complete} if
    for any given state $q$, any label $\sigma$, any data $d$ and any register
    valuation $\tau$, there exists a transition $q\xrightarrow{\sigma, \tst,\asgn} q'\in
    \delta$ such that $\tau,d\models \tst$.

  % It is said to be \emph{complete} when to an input l-data corresponds \emph{at least} one transition:$\forall q, \forall \sigma, \forall \tst, \exists^{\geq 1} \asgn, \exists^{\geq 1} q'$ such that $q \xrightarrow{\sigma, \tst, \asgn} q'$.
  % The class of deterministic register automata is denoted \dra.
  % Without loss of generality, and unless stated otherwise, all automata are assumed to be complete.

\subsection{Configurations and Runs}\label{def:CfgsRuns}
% \leo{Defining the semantics using an LTS might be clearer.}
A configuration is a pair $(q, \tau) \in Q \times (R \rightarrow
\mathcal{D})$.  %As mentioned above, this is an equality, not a
                       %containment, ie $\tst$ must be exactly the set
                       %of registers containing $d$\footnote{Such
                       %requirement allows for a concise definition of
                       %determinism.}.
Fix a transition $t = p \xrightarrow{\sigma, \tst, \asgn} p'$. We say that
$(q,\tau)$ enables $t$ on reading $(\sigma',d)$ if $q = p$, $\sigma' = \sigma$
and $\tau,d\models \tst$. Let $\text{next}(\tau,\asgn,d)$ be the valuation
$\tau'$ defined by $\tau'(i)=d$ if $i\in\asgn$, and $\tau'(i) = \tau(i)$
otherwise.
% such that $\forall i \in \asgn, \tau'(i) = d$ and $\forall i \notin \asgn,
% \tau'(i) = \tau(i)$.
We extend this notation to configurations as follows: if $\gamma = (q,\tau)$
enables $t$ on input $(\sigma,d)$, the \emph{successor} configuration of $(q,
\tau)$ by $t$ on input $(\sigma,d)$ is $\text{next}(\gamma,\asgn,d,t) = (p',
\text{next}(\tau,\asgn,d))$. We also write $\text{next}(\gamma,t,\sigma,d)$ to
denote the successor of $(q,\tau)$ by transition $t$ when $(q,\tau)$ enables $t$
on input $(\sigma,d)$. The \emph{initial configuration} is $(q_0, \tau_0^R)$.
% A configuration $(p, \tau) \in Q \times (\mathcal{D} \uplus \{\bot\})^R$ is
% said to \emph{enable} transition $q \xrightarrow{a, E, A} q'$ if $p = q$ and
% $\forall i,j \in E, \tau(i) = \tau(j)$ and
% % % % $\forall i \in E, \forall j \notin E, \tau(i) \neq \tau(j)$. By
% % % % extension, a transition $q \xrightarrow{a, E, A} q'$ is \emph{enabled}
% % % % from configuration $(p, \tau) \in Q \times (\mathcal{D} \uplus
% % % % \{\bot\})^R$ on input $(l,d) \in \Sigma \times \mathcal{D}$ if
% % % % $(p,\tau)$ \emph{enables} the transition and moreover, $a = l$ and
% % % % $\forall i \in E, \tau(i) = d$.
% % % The \emph{successor} configuration of $(p, \tau) \in (Q \times \mathcal{D}
% % % \uplus \{\bot\})^R$ on taking transition $p \xrightarrow{a, E, A} p'$ on
% % % input $(a,d) \in \Sigma \times \mathcal{D}$ is $(p', \tau')$ where
% % % $\forall i \in A, \tau'(i) = d$ and $\forall i \notin A, \tau'(i) =
% % % \tau(i)$.
Then, a \emph{run} over a data word $(\sigma_1, d_1) (\sigma_2, d_2) \dots$ is
an infinite sequence of transitions $t_0 t_1\dots$ such that there exists a
sequence of configurations $\gamma_0\gamma_1\dots = (q_0, \tau_0) (q_1, \tau_1)
\dots$ such that $\gamma_0$ is initial and for all $i\geq 0$, $\gamma_{i+1} =
\text{next}(\gamma_i,t_i,\sigma_i,d_i)$. With a run $\rho$, we associate its
sequence of states $\states(\rho) = q_0 q_1 \dots$

\subsection{Languages Defined by \ra}
Given a run $\rho$, we denote, by a
slight abuse of notation, $c(\rho) = \max \{j \mid c(q_l) = j \text{ for
  infinitely many } q_l \in \states(\rho)\}$ the maximum color that occurs
infinitely often in $\rho$. Then, in the parity acceptance condition, $\rho$ is
accepting whenever $c(\rho)$ is even. We consider two dual semantics for \ra:
nondeterministic (N) and universal (U). Given a \ra $A$, depending on whether
it is considered nondeterministic or universal, it recognises $L_{N}(A) = \{w
\mid \text{there exists an accepting run $\rho$ on $w$}\}$ or $L_{U}(A) = \{w
\mid \text{all runs $\rho$ on $w$ are accepting}\}$.
Note that those semantics are dual: for a \ra $A$, by letting $\overline{A}$ be
a copy of $A$ with colouring function $\overline{c}: q \mapsto c(q)+1$, we have that $L_{U}(\overline{A}) = \overline{L_{N}(A)}$.

We denote by \nra (resp. \ura) the
class of register automata interpreted with a nondeterministic (resp.
universal) parity acceptance condition, and given $A\in\nra$ (resp. $A\in
\ura$), we write $L(A)$ instead of $L_{N}(A)$ (resp. $L_{U}(A)$). We also
denote by \dra the class of deterministic parity register automata.

\section{Synthesis of Register Transducers}%
\label{sec:synthesisProblem}

\subsection{Specifications, Implementations and the Realisability Problem}%
\label{sec:specImplReal}
Let $\inalph$ and $\outalph$ be two finite alphabets of labels, and $\data$ a
countable set of data. A \emph{relational data word} is an element of $w\in
{[(\inalph\times \data) \cdot (\outalph\times \data)]}^\omega$. Such a word is called
relational as it defines a pair of data words in
$\datawords(\inalph,\data)\times \datawords(\outalph,\data)$ through the
following projections. If $w = x^1_\inp x^1_\outp x^2_\inp x^2_\outp\dots$, we
let $\projin(w) = x^1_\inp x^2_\inp\dots$ and $\projout(w) = x^1_\outp
x^2_\outp\dots$ We denote by $\relwords(\inalph,\outalph,\data)$ (just
$\relwords$ when clear from the context) the set of relational data words. A
\emph{specification} is simply a language $S\subseteq
\relwords(\inalph,\outalph,\data)$.
% such that its input domain $\projin(S)$ is universal\footnote{In the
% literature, the domain is usually total. We discuss the case of partial domain
% in Sec.~\ref{sec:domain}.}, i.e.\ $\projin(S) = \datawords(\inalph,\data)$.
An \emph{implementation} is a total function $I : {(\inalph\times
\data)}^*\rightarrow \outalph\times \data$. From $I$, we define another function
$f_I : \datawords(\inalph, \data) \rightarrow \datawords(\outalph,\data)$ which,
with an input data word $w_\inp = x^1_\inp x^2_\inp\dots \in \inalph\times
\data$, associates the output data word $f_I(w_\inp) = x^1_\outp x^2_\outp
\dots$ such that $\forall i\geq 1$, $x^i_\outp = I(x^1_{\inp} \dots
x^{i{-}1}_\inp)$. $I$ also defines a language of relational data words $L(I) =
\{\interl{w_\inp}{f_I(w_\inp)}\mid w_\inp\in \datawords(\inalph,\data)\}$.

We say that $I$ \emph{realises} $S$ when $L(I)\subseteq S$, and that $S$ is
\emph{realisable} if there exists an implementation realising it. Note that
since $f_I$ is a total function, we have that if $S$ is realisable, then in
particular its domain is total, i.e.\ for all $w_\inp \in \datawords(\inalph,
\data)$, there exists $w_\outp \in \datawords(\outalph, \data)$ such that
$\interl{w_\inp}{w_\outp} \in S$. Therefore, any specification whose domain is
not total is not realisable according to this definition. For a discussion on
this definition, see Section~\ref{sec:domain}.

The \emph{realisability problem} consists, given a (finite representation of a)
specification $S$, in checking whether $S$ is realisable. In general, we
parameterise this problem by classes of specifications $\mathcal{S}$ and
of implementations $\mathcal{I}$, defining the
$(\mathcal{S},\mathcal{I})$-realisability problem, denoted
$\real(\mathcal{S},\mathcal{I})$. Given a specification $S\in \mathcal{S}$, it
asks whether $S$ is realisable by some implementation $I \in \mathcal{I}$.
We now introduce the classes $\mathcal{S}$ and
$\mathcal{I}$ we consider.

\subsection{Specification Register Automata}
In this paper, we
consider specifications defined by register automata (hence alternately
reading input and output labelled data). We assume that
the set of states is partitioned into $Q_\inp$ (called input states,
reading only labels in $\inalph$) and $Q_\outp$ (called output states,
reading only labels in $\outalph$), where $q_0 \in Q_\inp$,
and such that the transition relation $\delta$
alternates between these two sets,
% with the restriction that no assignment can be performed on the output data,
i.e.\ \[\delta \subseteq \bigcup\nolimits_{\alpha = \inp, \outp}(Q_\alpha \times \Sigma_\alpha \times \Tst_R \times \Asgn_R
\times Q_{\overline{\alpha}}),\] where $\overline{\inp} = \outp$ (resp. $\overline{\outp} = \inp$).
We denote by $\dra$ (resp. $\nra$,
$\ura$) the class of specifications defined by deterministic (resp.
nondeterministic, universal) parity register automata.
\begin{exa}%
  \label{exa:reqGrant}
  Remember the setting described in the introduction of a server granting
  requests from an unbounded set of clients $C$. The input (resp.\ output) finite
  alphabets are $\inalph = \{\req, \idle\}$ and $\outalph = \{\grt, \idle\}$,
  while the set of data is any countably infinite set $\data$ containing $C$.
  Without loss of generality, $C \subseteq \N$ is a set of client ids, so we can
  take $\data = \N$. Then, as stated in the introduction, the specification that
  for all $i \in C$, every request of client $i$ is
  eventually granted can be expressed with the \ura of
  Figure~\ref{fig:reqGrant}.
  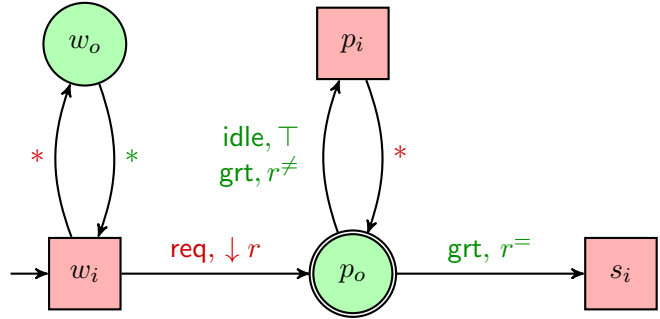
\begin{figure}[ht]
    \centering
    \begin{tikzpicture}[->,>=stealth',auto,node distance=2.5cm,thick]
      \tikzstyle{every state}=[text=black]
      \tikzstyle{input}=[rectangle,fill=red!30,minimum size=9.5mm]
      \tikzstyle{output}=[fill=green!30,minimum size=11mm] \tikzstyle{input
        char}=[text=Red3] \tikzstyle{output char}=[text=Green4] \node[state,
      initial, initial text={}, input] (ii) {$w_i$}; \node[state, above=2cm of
      ii, output] (io) {$w_o$}; \node[state, right= of ii, output, accepting]
      (po) {$p_o$}; \node[state, above=2cm of po, input] (pi) {$p_i$};
      \node[state, right= of po, input] (si) {$s_i$};

      \path (ii) edge node[above,input char] {\req, $\downarrow{} r$} (po);
      \path (po) edge node[above, output char] {\grt, $r^=$} (si);
      \path (ii) edge[bend left=20] node[left, input char] {*} (io);
      \path (po) edge[bend left=20] node[left, output char] {$\begin{array}{r} \idle,\top \\ \grt, r^{\neq} \end{array}$} (pi);
      \path (pi) edge[bend left=20] node[right, input char] {*} (po);
      \path (io) edge[bend left=20] node[right, output char] {*}
      (ii);
    \end{tikzpicture}
    \caption{A universal register automaton checking that every request is
      eventually granted. Input is in red (states are squares), output is in
      green (states are circles). Finite labels are \textsf{sans
        serif}.
      All states have priority $0$, except the
      doubly circled state $p_o$, which has priority $1$. This corresponds to a
      co-B\"uchi acceptance condition with rejecting state $p_o$. The automaton
      always loops between $w_i$ and $w_o$ (the $*$ symbol means that the
      transition is taken, no matter the labelled-data received). Whenever it
      receives a request as input, it universally spawns a run which stores the
      corresponding id in its single register $r$ (depicted as $\downarrow r$),
      and transitions to $p_0$. Then, it loops between $p_i$ and $p_o$ while it
      does not receive the corresponding grant, with matching id, as output
      (i.e.\ either reads $\idle$ or receives a grant with wrong id: $d \neq r$).
      When it receives a grant with the right id ($d = r$), it transitions to $s_i$,
      then the run dies at the next step (which favors acceptance in the
      universal semantics).}%
    \label{fig:reqGrant}
  \end{figure}
\end{exa}

\subsection{Register Transducers As Implementations}
We consider
implementations represented as transducers processing data words. A
\emph{register transducer} is a tuple $T = (\inalph,\outalph, Q, q_0, \delta,
R)$ where $Q$ is a finite set of states with initial state $q_0$, $R$ is a
finite set of registers, and $\delta : Q \times \inalph \times \Tst_R
\rightarrow \Asgn_R \times \outalph \times R \times Q$ is the transition
function (as before, $\Asgn_R = 2^R$), assumed to be complete in the
sense that, as for \ra, for every state $q$ and label $\sigma_\inp$, for every data
$d$ and register valuation $\tau$, there exists a transition
$\delta(q,\sigma_\inp,\phi) = (\asgn,\sigma_\outp,r,q')$ such that
$\tau,d\models \phi$. When processing an l-data
$(\sigma_\inp,d)$, $T$ compares $d$ with the content of some of its registers,
and depending on the result, moves to another state, stores $d$ in some
registers, and outputs some label in $\outalph$ along with the content of some
register $r\in R$.

Let us formally define the semantics of a register transducer $T$, as an
implementation $I_T$. First, for a finite input data word $w =
(\sigma_{\inp}^1,d_\inp^1)\dots (\sigma_\inp^n,d_\inp^n)$ in ${(\inalph \times
\data)}^*$, we denote by $(q_i,\tau_i)$ the $i$th configuration reached by $T$ on
$w$, where $(q_0,\tau_0)$ is initial and for all $0<i< n$, $(q_i,\tau_i)$ is the
unique configuration such that there exists a transition
$\delta(q_{i-1},\sigma_\inp^{i}, \tst) = (\asgn,\sigma_\outp, r, q_i)$ such that
$\tau_{i-1},d_\inp^{i}\models \tst$ and $\tau_i =
\text{next}(\tau_{i-1},d_\inp^i,\asgn)$. We let $(\sigma_\outp^i,d_\outp^i) =
(\sigma_\outp, \tau_i(r))$ and $I_T(w) = (\sigma_\outp^n,d_\outp^n)$. Then, we
denote $f_T = f_{I_T}$ and $L(T) = L(I_T)$. Note that if $T$ is interpreted as a
\dra with exactly one transition per output state and whose states are all
accepting (i.e.\ have even maximal parity $0$), then $L(I_T)$ is indeed the
language of such register automaton. We denote by $\rt[k]$ the class of
implementations defined by register transducers with at most $k$ registers, and
by $\rt = \bigcup_{k\geq 0} \rt[k]$ the class of implementations defined by
register transducers.
\begin{exa}%
  \label{exa:grantReqs}
  Consider again the specification of Example~\ref{exa:reqGrant}. Such
  specification is realisable for instance by the transducer which outputs
  $(\grt,i)$ whenever it reads $(\req,i)$ and $(\idle,d)$ ($d$ does not matter)
  whenever it reads $\idle$,
  which is depicted in Figure~\ref{fig:grantReqs}.
  \begin{figure}[ht]
    \centering
\begin{tikzpicture}[->,>=stealth',auto,node distance=2.5cm,thick]
        \tikzstyle{every state}=[fill=yellow!30,text=black]
\newcommand{\inColored}[1]{\textcolor{Red3}{#1}}
\newcommand{\outColored}[1]{\textcolor{Green4}{#1}}

\node[state, initial, initial text={}] (i) {};

\path (i) edge[loop above] node[above] {$\inColored{\req}, \inColored{\top} \mid
  \inColored{\downarrow r}, \outColored{\grt}, \outColored{\uparrow r}$} (i);
\path (i) edge[loop below] node[below] {$\inColored{\idle}, \inColored{\top}
  \mid \outColored{\idle},\outColored{\uparrow r}$} (i);
\end{tikzpicture}
\caption{A register transducer immediately granting each request. The notations
  are the same as in Figure~\ref{fig:reqGrant}. Additionally, here, $\uparrow
  r$ means that the transducer outputs the content of $r$.}%
\label{fig:grantReqs}
\end{figure}
\end{exa}

\subsection{Synthesis from Data-Free Specifications}
If in the latter
definitions of the synthesis problem, one considers specifications defined by \ra with no
registers (i.e.\ parity automata), and implementations defined by \srt with no
registers, then the data in data-words can be ignored and we are back to the
classical reactive synthesis setting, for which important results are known:
% By letting, on the one hand, \textsf{NFA} denote the class of (data-free)
% specifications defined by (register-free) nondeterministic parity automata, and
% on the other hand, \textsf{FT} denote the class of (register-free) finite
% transducers, we have:
\begin{thmC}[\cite{BuLa69}]\label{thm:reactivesynthesis}
  The realisability problem of (data-free) specifications given as (register-free)
  nondeterministic parity automata by (register-free) transducers is {\normalfont\textsc{ExpTime}}-complete.
\end{thmC}
\begin{proof}
  The upper bound was first established in~\cite{BuLa69} and~\cite{PnuRos:89}.
  Hardness is folklore, but a proof in the particular case of finite words
  (easily adapted to the $\omega$-word setting) can be found
  in~\cite[Proposition 6]{DBLP:conf/icalp/FiliotJLW16}.
\end{proof}

\section{Unbounded Synthesis}

In this section, we consider the unbounded synthesis problem $\real(\ra,\rt)$.
Thus, we do not fix a priori the number of registers of the
implementation.

\subsection{Undecidability Results}

Let
us first consider the case of \nra and \ura, which are, in our setting, the most
natural devices to express data word specifications.
% By reducing the
% universality of \nra over finite words (which is
% undecidable~\cite{Neven:2004:FSM:1013560.1013562}) to our synthesis problems, we
% show:
Unfortunately, the two corresponding problems happen to be undecidable:
\begin{thm}%
  \label{thm:unboundedNRA}
  {\normalfont$\real(\nra,\rt)$} is undecidable.
\end{thm}
\begin{proof}
  We reduce the problem from the universality of \nra over finite words, which
  is undecidable~\cite{Neven:2004:FSM:1013560.1013562}. Let $A$ be a (finite
  data-word) \nra. Let $S$ be a specification which first reads some finite data
  word $w$, then a separator $\#$ (its associated data is arbitrary and not
  represented), then allows for swapping the first and second l-data on any
  input read later on, while also allowing to behave like the identity whenever
  $w \in L(A)$. $S$ is also equal to the identity over any word not containing
  $\#$ so that its domain is total. Formally, let $S = S_1 \cup S_2 \cup T$,
  where:
  \begin{align*}
  S_1 &=  \left\{(w \# (\sigma_1, d_1) (\sigma_2, d_2) u, w \# (\sigma_2, d_1)
  (\sigma_1, d_2) u) \;\middle|\; \begin{array}{c}d_1,d_2\in \data, \sigma_1,\sigma_2 \in \Sigma \\ w \in \findatawords, u \in \datawords \end{array}\right\}\\
  S_2 &=  \{(w
  \# u, w \# u) \mid w \in L(A), u \in \datawords\}\\ % chktex 1 chktex 36
  T &=  \{(w,w) \mid w \notin \findatawords \# \datawords\} % chktex 1
  \end{align*}
  $S$ is definable by a $\nra$ running
  over relational data words, because each component is and $\nra$ are closed
  under union. Recognising the interversion of the first two labels $\sigma_1$
  and $\sigma_2$ after the $\#$ in $S_1$ is easily done using nondeterminism,
  and the behaviour on data is the identity, so $S_1$ is \nra-definable. Then,
  emulating the identity over some \nra-definable domain is easy, so $S_2$ and
  $T$ are also \nra-definable.

  Now, if $A$ is universal, ie $L(A) = \findatawords$, then the identity
  $\id_\datawords$ over $\datawords$ realises $S$, since then $\id_\datawords
  \subseteq S$ and has total domain. Conversely, if $L(A) \subsetneq
  \findatawords$, assume by contradiction that $S$ is realisable by a register
  transducer $I$. Let $w \in \findatawords \backslash L(A)$. Then, for any
  $(\sigma_1, d_1) (\sigma_2, d_2) u \in \datawords$, we must have $I(w \#
  (\sigma_1, d_1) (\sigma_2, d_2) u) = w \# (\sigma_2, d_1) (\sigma_1, d_2) u$;
  but this implies guessing the second label while having only read the first
  one, which is not doable by any transducer as long as $\sigma_1 \neq
  \sigma_2$.
\end{proof}

Actually, we can observe that such undecidability proof extends to
$\real(\nra,\rt[1])$, and to all $\real(\nra,\rt[k])$ for $k \geq 1$.
Indeed, $A$ is universal iff $S$ is realisable by the identity over data words,
which is implementable using a $1$-register transducer:
\begin{thm}%
  \label{thm:boundedNRA}
  For all $k \geq 1$, {\normalfont$\real(\nra,\rt[k])$} is undecidable.
\end{thm}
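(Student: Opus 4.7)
The plan is to reuse, essentially unchanged, the reduction used to prove Theorem~\ref{thm:unboundedNRA}. Given an \nra $A$ over finite data words, I would construct exactly the same specification $S = S_1 \cup S_2 \cup T$, which is \nra-definable for the reasons already given there. What must be revisited is only the granularity of the positive direction: I must exhibit a realiser belonging to $\rt[k]$ (not merely to $\rt$) whenever $A$ is universal, while the negative direction transfers for free since $\rt[k] \subseteq \rt$.

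For the universal case, the key observation is that the witness used in the proof of Theorem~\ref{thm:unboundedNRA}, namely the identity $\id_\datawords$, is realisable by a single-state, single-register transducer: on reading $(\sigma,d)$ it unconditionally stores $d$ into its unique register $r$ and outputs $(\sigma, r)$, which is exactly the labelled data just consumed. This transducer lies in $\rt[1]$, and since $k \geq 1$ it lies in $\rt[k]$ as well (any additional registers may simply be left untouched, formally by choosing $\asgn = \emptyset$ on each transition for those registers). Conversely, if $L(A) \subsetneq \findatawords$, the undecidability proof of Theorem~\ref{thm:unboundedNRA} rules out \emph{every} register transducer regardless of its register count, so in particular no element of $\rt[k]$ realises $S$.

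There is no substantive obstacle to surmount, since the positive witness constructed in Theorem~\ref{thm:unboundedNRA} already uses only one register; the only thing worth verifying explicitly is that a $k$-register transducer can vacuously simulate a $1$-register one, which is immediate. Chaining both directions with the undecidability of universality for \nra over finite data words~\cite{Neven:2004:FSM:1013560.1013562} yields undecidability of $\real(\nra,\rt[k])$ for every $k \geq 1$.
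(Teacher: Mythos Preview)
Your proposal is correct and matches the paper's own argument essentially verbatim: the paper simply observes that the reduction of Theorem~\ref{thm:unboundedNRA} already yields the bounded result because the identity witness lies in $\rt[1]$, hence in every $\rt[k]$ for $k\geq 1$, while the negative direction is inherited from $\rt[k]\subseteq\rt$.
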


Now, we can show that the unbounded synthesis problem is also undecidable for \ura, answering a question left open in~\cite{DBLP:conf/atva/KhalimovMB18}.
% The proof is made clearer by first introducing the notion of \emph{critical relation}: $R$ is said to be \emph{critical} when it is of the form $R = \{(u \# v, u \# w)\}$, and such that for all $k \geq 0$, $R \cap \{(u \# v, u \# w) \mid \size{\dt(u)} \leq k\}$ is realisable by a transducer with $\Omega(k)$ registers.
% \begin{lem}
%   There exists a critical relation definable by a \ura.
% \end{lem}

% Note that the same paper establishes decidability of the bounded synthesis (cf also Section~\ref{sec:boundedURA}) for \ura.

\begin{thm}%
  \label{thm:unboundedURA}
  {\normalfont$\real(\ura,\rt)$} is undecidable.
\end{thm}

\begin{proof}
  We present a reduction to our synthesis problem from the emptiness problem of
  \ura over finite words. The latter is undecidable
  by a direct reduction
  from the universality problem of \nra, which is undecidable
  by~\cite{Neven:2004:FSM:1013560.1013562}.
  % : $L_{\nra}(A) = \overline{L_{\ura}(A)}$ and the emptiness of $\nra$ is undecidable~\cite{Neven:2004:FSM:1013560.1013562}.

  First, consider the relation $S_1 = \{(u \# v, u \# w) \mid u \in
  \findatawords, v \in \datawords$, each data of $u$ appears infinitely
    often in $w\}$. $S_1$ is recognised by a $1$-register \ura which, upon
  reading a data $d$ in $u$, stores it in its register and checks that it
  appears infinitely often in $w$ by visiting a state with maximal parity $2$
  every time it sees $d$ (all other states have parity $1$). Note that for all
  $k \geq 1$, $S_1 \cap \{(u \# v, u \# w) \mid u \in \findatawords, v,w \in
  \datawords \text{ and $u$ has at most $k$ distinct data}\}$ is realisable by a
  $k$-register transducer: on reading $u$, store each distinct data in one
  register, and after the $\#$ output them in turn in a round-robin fashion.
  However, $S_1$ is not realisable: on reading the $\#$ separator, any
  implementation must have all the data of $u$ in its registers, but the number
  of such data is not bounded ($u$ can have pairwise distinct data and be of
  arbitrary length).

 Then, let $A$ be a \ura over finite data words. Consider the specification $S = S_1 \cup S_2 \cup T$, where $S_2 = \{(u \# v, u \# w \# {(a,\dz)}^\omega) \mid u \in \findatawords, v \in \datawords, w \in L(A)\}$ and $T = \{(u,w) \mid u \notin \findatawords \# \datawords, w \in \datawords\}$. $S$ has total domain, and is recognisable by a \ura. Indeed, \ura are closed under union, by the same product construction as for the intersection of \nra~\cite{Kaminski:1994:FA:194527.194534}, and each part is \ura-recognisable: $S_1$ is, as described above, $S_2$ is by simulating $A$ on the output to check $w \in L(A)$ then looping over $(a, \dz)$, and $T$ simply checks a regular property.

 Now, if $L(A) \neq \varnothing$, let $w \in L(A)$ and let $D_w = \{d_1, \dots,
 d_k\}$ be the set of data distinct from $\dz$ that occur in $w$. As a
 consequence of the closure under automorphisms of register automata~\cite[Proposition
 2]{Kaminski:1994:FA:194527.194534}, we have: for any set $D
 \subseteq \data$ such that $\size{D} \geq k$, and for any injection $\pi : D_w
 \cup \{\dz\}
 \rightarrow D \cup \{\dz\}$ such that $\pi(\dz) = \dz$, by
 extending $\pi$ to a morphism $\widehat{\pi}$ over data words in the usual way
 (and behaving as the identity over the finite labels),
 $\widehat{\pi}(w) \in L(A)$. Indeed, as register automata can only test for
 equality, acceptance is determined by the equality relations between the
 different data of the input, so we can rename them (with the exception of
 $\dz$, which is a distinguished data).

 Then, $S$ is realisable by a register transducer $I$ with $k+2$ registers.
 While it has not read a $\#$, $I$ reads its input $u$ and outputs it along the
 way, using one register to store the current data and output it immediately.
 Meanwhile, it also stores the first $k$ distinct data of $u$ in its registers.
 Its last register is used to keep $\dz$ in memory. If there is no $\#$ in the
 input, then $I(u) = u$, so $(u,I(u)) \in T$. Now, if some $\#$ is read, $I$
 outputs $\#$ (along with an arbitrary data), and there are two cases: if the
 number of data in $u$ is lower than or equal to $k$, $I$ realises $S_1$, as
 described above. Otherwise, let $D_u = \{e_1, \dots, e_l\}$ be the set of data
 of $u$ distinct from $\dz$, indexed by order of appearance $(l \geq k)$. Then,
 let $\pi : D_w \cup \{\dz\} \rightarrow D_u \cup \{\dz\}$ be such that for all
 $1 \leq i \leq k, \pi(d_i) = e_i$ and $\pi(\dz) = \dz$: $\pi$ is injective.
 Now, $I$ can output $\widehat{\pi}(w) \# {(a, \dz)}^\omega$ since it stored
 $\{e_1, \dots, e_k\}$ in its registers, hence realising $S_2$. Conversely, if
 $L(A) = \varnothing$, then $S$ is not realisable. If it were, $S \cap
 \findatawords \# \datawords = S_1$ would be too, as a regular domain
 restriction, but we have seen above that this is not the case. Thus, $S$ is
 realisable iff $L(A) \neq \varnothing$.
\end{proof}

\subsection{\texorpdfstring{A Decidable Subclass: \draa}{A Decidable Subclass: DRAido}}

However, we show that restricting to \dra allows to recover decidability,
modulo one additional assumption, namely that the output data of a
transition has to be the content of some register. We formally
define this class as follows:

\begin{defi}[\draa]
Let $\mathcal{A} = (\Sigma, \mathcal{D}, Q, q_0, \delta, R, c)$
be a \dra. We say that $\mathcal{A}$ is with \emph{input-driven outputs}
if for any
output transition $p\xrightarrow{\sigma,\tst,\asgn} q$, the test
$\tst$ is of the form $r^=$ for some $r\in R$. We denote by $\draa$
the class of \dra with input-driven outputs.
\end{defi}

Such
assumption rules out pathological, and to our opinion uninteresting and
technical cases stemming from the asymmetry between the class of specifications
and implementations. E.g., consider the single-register \dra in
Fig.~\ref{fig:cornerCaseDRA1} (finite labels are arbitrary and not depicted). It
starts by reading one input data $d$ and stores it in $r$, asks that the
corresponding output data is different from the content $d$ of $r$, then accepts any output over any input
(transitions $\top$ are always takeable). It is not realisable because
transducers necessarily output the content of some register (hence producing a
data which already appeared). On the other hand, having tests of the
form $\tst =
r^{\neq}$ for instance does not imply unrealisability, as shown by the \dra of
Fig.~\ref{fig:cornerCaseDRA2}: it starts by reading one data $d_1$, asks to copy
it on the output, then reads another data $d_2$, and requires that the output is
either distinct from $d_1$ or equal to it, depending on whether $d_2 \neq d_1$.
It happens that such specification is realisable by the identity.

\begin{figure}[ht]
  \centering
  \begin{subfigure}[t]{.35\textwidth}
    \centering
    \resizebox{\textwidth}{!}{%
      \begin{tikzpicture}[->,>=stealth',auto,node distance=1.25cm,thick,scale=0.9,every node/.style={scale=0.85}]
        \tikzstyle{every state}=[text=black, font=\scriptsize]
        \tikzstyle{input}=[rectangle,fill=red!30,minimum size=6.5mm]
        \tikzstyle{output}=[fill=green!30,minimum size=7.5mm]
        \tikzstyle{input char}=[text=Red3]
        \tikzstyle{output char}=[text=Green4]

        \node[state, initial, initial text={}, input] (i) {1};
        \node[state, right = of i, output]            (p) {2};
        \node[state, right = of p, input, accepting]  (q) {3};
        \node[state, above = 0.75cm of q, output]     (r) {4};

        \path (i) edge            node[input char]                   {$\top, \downarrow {} r$} (p);
        \path (p) edge            node[output char]                  {$r^{\neq}$} (q);
        \path (q) edge[bend left] node[left,near end,input char]     {$\top$} (r);
        \path (r) edge[bend left] node[right,near start,output char] {$\top$} (q);
      \end{tikzpicture}
    }
  \caption{An unrealisable \dra.}%
  \label{fig:cornerCaseDRA1}
  \end{subfigure}\hfill %
  \begin{subfigure}[t]{.6\textwidth}
    \centering
    \resizebox{\textwidth}{!}{%
\begin{tikzpicture}[->,>=stealth',auto,node distance=1.25cm,thick,scale=0.9,every node/.style={scale=0.85}]
\tikzstyle{every state}=[text=black, font=\scriptsize]
\tikzstyle{input}=[rectangle,fill=red!30,minimum size=6.5mm]
\tikzstyle{output}=[fill=green!30,minimum size=7.5mm]
\tikzstyle{input char}=[text=Red3]
\tikzstyle{output char}=[text=Green4]

\node[state, initial, initial text={}, input] (i) {1};
\node[state, right=of i, output]              (p) {2};
\node[state, right=of p, input]               (q) {3};
\node[state, right=of q, output]              (s) {5};
\node[state, above=0.75cm of s, output]       (r) {4};
\node[state, right=of s, input, accepting]    (t) {6};
\node[state, above=0.75cm of t, output]       (u) {7};

\path (i) edge            node[input char]                   {$\top, \downarrow{} r$}     (p);
\path (p) edge            node[output char]                  {$r^=$}             (q);
\path (q) edge            node[sloped,pos=0.8,input char]    {$r^{\neq}$} (r);
\path (q) edge            node[input char]                   {$r^=, \downarrow{} r$}    (s);
\path (r) edge            node[sloped,pos=0.2,output char]   {$r^{\neq}$}                 (t);
\path (s) edge            node[output char]                  {$r^=$}                   (t);
\path (t) edge[bend left] node[left,near end,input char]     {$\top$}                     (u);
\path (u) edge[bend left] node[right,near start,output char] {$\top$}                     (t);
\end{tikzpicture}
    }
  \caption{A similar \dra, suprisingly realisable.}%
  \label{fig:cornerCaseDRA2}
  \end{subfigure}%
  \caption{Pathological \dra specifications.}%
  \label{fig:cornerCaseDRA}
\end{figure}
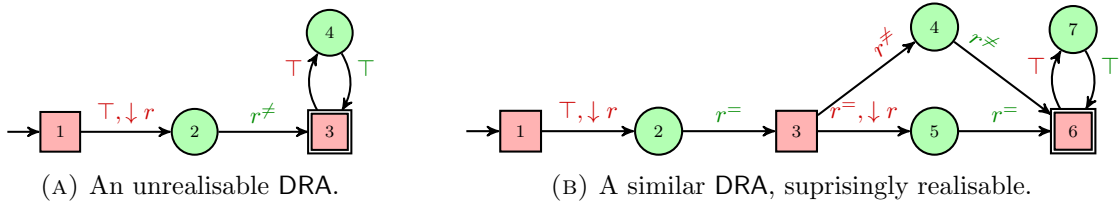

We reduce the realisability of \draa-specifications to solving a
finite parity game. To ease its construction, we first need to confer additional
properties to the specification automaton.

A \ra $A$ is said to be \emph{locally concretisable} if for every finite sequence
of transitions $\rho = t_1\dots t_n$, for every finite data word $w\in
\findatawords$ such that $\rho$ is a partial run of $A$ on $w$, we have that for
all transitions $t \in \delta$ which are compatible with $\rho$ (i.e.\ such that
the source state of $t$ is equal to the end state of $\rho$), there exists $d
\in \data$ such that $\rho t$ is a partial run of $A$ on $w d$. Note in
particular that when $\rho$ is not a partial run, such condition trivially
holds.

We say that a \ra $A$ is in \emph{good form} if
\begin{enumerate}
\item\label{itm:locallyConcret} it is locally concretisable
\item\label{itm:completeInput} it is complete on its input states
\item\label{itm:testsMaxConsist} its tests $\phi$ are maximally consistent conjunctions of atoms
\item\label{itm:NoOutputAssgn} any transition $t$ whose test is different from $\bigwedge_{r\in
    R} r^{\neq}$ does not conduct an assignment ($\asgn = \varnothing$)
\end{enumerate}
\begin{lem}\label{lem:trim}
  For all \ra $A$, there exists an equivalent \ra $A'$ in good form with
  exponentially many more states and transitions, and the same number
  of priorities and registers. Moreover if $A$ is a \draa, so is $A'$.
\end{lem}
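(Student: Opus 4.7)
The plan is to enforce the four good-form conditions sequentially, each through a transformation with exponential blow-up in states and transitions but preserving the recognized language (under both nondeterministic and universal semantics), the number of registers and priorities, and the \draa property when applicable. I would address the conditions in the order 3, 4, 2, 1.

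For condition 3, I would rewrite each test $\phi$ as a disjunction over the $2^{|R|}$ maximally consistent conjunctions of atoms --- each specifying, for every $r \in R$, whether $r^=$ or $r^{\neq}$ holds --- and split each transition into one copy per satisfying conjunction, preserving its assignment and target. Distinct maximally consistent conjunctions are pairwise mutually exclusive, so determinism is preserved; for \draa, an output transition with test $r^=$ splits into transitions whose tests still contain the conjunct $r^=$, preserving input-driven outputs.

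For condition 4, I would augment each state $q$ with a function $f : R \to R$, maintaining the invariant that $f(r)$ is the physical register holding the value that the original automaton would have in its register $r$. When an original transition with non-fresh test (some conjunct $r^=$) assigns $\asgn$ to the current data $d$, I simulate it with a transition of empty assignment, redirecting $f$ so that $r \in \asgn$ now points to the physical register already holding $d$; when the test is $\bigwedge_{r \in R} r^{\neq}$ (fresh data), I may physically assign to some spare register in $R \setminus f(R \setminus \asgn)$ (which is always non-empty as soon as $\asgn \neq \varnothing$) and update $f$ accordingly. Care is needed when the translation of the original test introduces ``spurious'' coincidences (the current data happens to equal the content of an unused physical register); these are handled by splitting the transition over all possible coincidence patterns and falling back to the first case (updating $f$ instead of assigning). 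The state space is blown up by a factor of $|R|^{|R|}$, still exponential, and the structure of the simulation ensures that determinism and input-driven outputs are preserved.

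For condition 2, I would add, for each input state $q$, label $\sigma$, and maximally consistent conjunction $\phi$ without a matching transition, a transition to a fresh sink pair $(q_{\textsf{sink}}^\outp, q_{\textsf{sink}}^\inp)$ looping on every input and output label, with priority chosen according to the intended semantics: odd (rejecting) for \nra so that the added runs do not contribute to $L_N$, even (accepting) for \ura so that the added runs vacuously satisfy the universal condition. For condition 1, I would trim by restricting to states reachable from $q_0$ and co-reachable to an accepting path, and use a standard product with a reachability/co-reachability analysis on $\textsf{fin}(A)$ to ensure each accepted transition sequence is realizable by some data word in $L(A)$. The main technical obstacle is condition 4, where verifying the invariant on $f$ across both the non-fresh and fresh-data cases, and correctly handling the spurious-coincidence splits, is the delicate part --- once tests are normalized by condition 3, the remaining steps follow standard constructions.
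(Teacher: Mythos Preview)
Your overall strategy---normalise tests, remove non-fresh assignments, complete, then trim---is sensible, and your treatment of condition~4 via a redirection function $f:R\to R$ is a legitimate variant of the finite-memory-automata conversion the paper also invokes. But the key device you are missing is the one the paper applies \emph{first}: it enriches each state with the current equivalence relation $C\in\textsf{ER}(R)$ on the registers (recording exactly which registers presently hold equal data), and at a state $(q,C)$ only keeps those maximally consistent tests $\alpha_E$ that are compatible with $C$ (i.e., $E$ is a class of $C$ or $E=\varnothing$). This single enrichment simultaneously yields condition~3 \emph{and} guarantees, by the invariant $C=[\tau]$, that every transition out of a reachable state is actually enabled by some data value. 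Trimming then reduces to discarding unreachable states.

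In your ordering, condition~3 is achieved by pure transition splitting with no state enrichment. This leaves dead transitions: from the initial state, where every register holds $\dz$, a split transition with test $r_1^{=}\wedge r_2^{\neq}$ is syntactically present but can never fire. These survive your condition-4 step too---your function $f$ records which logical registers share a physical slot, but not whether two distinct physical slots currently hold the same value---so after your steps 3,~4,~2 there are still accepting paths in $\textsf{fin}(A')$ that no data word realises. Your step for condition~1, ``a standard product with a reachability/co-reachability analysis'', is precisely where this must be repaired, yet you do not say what the product is with; the natural answer is an automaton tracking the equivalence relation on registers, i.e., exactly the paper's device. So the work you flag as routine is where the main idea lives, and your closing remark that ``once tests are normalized by condition~3, the remaining steps follow standard constructions'' inverts the difficulty: in your decomposition, condition~1 is the hard step, not condition~4.
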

\begin{proof}
  Let $A = (\Sigma, \data, Q, q_0, \delta, R, c)$ be a $\ra$. First, we can
  assume that $A$ is complete on its input states: add two sink states $s_\inp$
  and $s_\outp$ with transitions $(s_\inp,\sigma_\inp,\top,\varnothing,s_\outp)$
  and $(s_\outp,\sigma_\outp,r^=,\varnothing,s_\inp)$ for all $\sigma_\inp\in
  \inalph, \sigma_\outp\in\outalph, r \in R$, each with odd priority $c(s_\inp)
  = c(s_\outp) = 1$. Then, for all $q_\inp \in Q_\inp$, and all finite label
  $\sigma_\inp \in \inalph$, add a transition $q_\inp \xrightarrow{\sigma_\inp,
    \psi, \varnothing} s_\outp$ where $\psi = \neg \bigvee_{q_\inp
    \xrightarrow{\sigma_\inp, \phi, \asgn} q_\outp} \phi$ is a test which is
  satisfied by a data if and only if such data satisfies no other possible test.
  This does not affect determinism nor the recognised language (as each added
  state has odd priority), and preserves the fact of being \textsf{ido}.

  Now, we enrich the states with information on the equalities between registers
  in the current register valuation.
  Formally, we define constraints\footnote{The notion of \emph{constraint} is
    pervasive in the study of registers automata, e.g.\ to recognise the
    projection over finite labels.} as equivalence relations on $R$. In the
  following, we denote by $\textsf{ER}(R)$ the set of equivalence relations on
  $R$. Given a valuation $\tau$ of registers in $R$, we can associate to it an
  equivalence relation on $R$ in the natural way (two registers $r,r'\in R$ are
  equivalent iff $\tau(r)=\tau(r')$). We denote it by $[\tau]$. We use the
  letter $C$ to denote an element of $\textsf{ER}(R)$, and we call it a
  constraint.

  We let $A' = (\Sigma, \data, Q', q'_0, \delta', R, c')$ be defined as follows:
  \begin{itemize}
  \item $Q' = Q \times \textsf{ER}(R)$
  \item $q'_0 = \left( q_0,[\tau_0^R] \right)$
  \item $c'(q,C) = c(q)$, for every $(q,C)\in Q \times \textsf{ER}(R)$
  \item $\delta'$ will be defined in the sequel.
  \end{itemize}

\noindent
  Given a constraint $C$, and a set $E\subseteq R$ corresponding to an
  equivalence class of $C$, we define a test corresponding to a maximally
  consistent conjunction of equalities and inequalities: $\alpha_E =
  \bigwedge_{r\in E}r^= \wedge \bigwedge_{r\not\in E} r^{\neq}$. A data value
  satisfies this test iff it is equal to the (common) value stored in registers
  of $R$. We also consider the test $\alpha_\varnothing = \bigwedge_{r\in
    R}r^{\neq}$ which corresponds to the case of a fresh data value, i.e.\ a data
  value distinct from all the values stored in registers.

  Consider a transition $(p,\sigma,\phi,\asgn,q)\in \delta$. Given a formula
  $\alpha_E$ as defined above, one can decide whether the formula $\alpha_E
  \Rightarrow \phi$ is valid or not. If this is the case, then we add the
  following transition to $\delta'$:
  \[
  (p,C) \xrightarrow{\sigma, \alpha_E, \asgn} (q,C')
  \]
  where $C'$ is defined as follows: two registers $r,r'$ are in relation with
  respect to $C'$ if and only if one of the following cases holds:
  \begin{itemize}
  \item they are in relation in $C$, and not in $\asgn$
  \item they are both in $\asgn$
  \item $r$ belongs to $E$ and $r'$ belongs to $\asgn$, or vice versa.
  \end{itemize}
  First, observe that since $A$ is complete on its input states, so is $A'$ and
  property~(\ref{itm:completeInput}) holds. Moreover, by definition, $A'$
  satisfies property~(\ref{itm:testsMaxConsist}).

  Now, one can show by induction on the length $n$ of the partial run that every
  partial run $\rho=t_1 \dots t_n$ of $A'$ over some finite data word $w \in
  \findatawords$ reaching some configuration $((p,C),\tau)$ satisfies
  $C=[\tau]$. Thus, for every run of $A'$, by denoting
  ${\{((q_i,C_i),\tau_i)\}}_{i \in \N}$ its sequence of configurations, we have
  $C_i = [\tau_i]$.

  Additionally, for each run of $A$, we can build a run of $A'$ in a
  deterministic manner: let $\rho = t_1 t_2 \dots$ be a run of $A$ over some
  data word $w = (\sigma_1,d_1) (\sigma_2,d_2) \dots$, where for all $i \in \N$,
  $t_{i+1} = q_i \xrightarrow{\sigma_i, \tst_i, \asgn_i} q_{i+1}$ and let
  ${\{(q_i,\tau_i)\}}_{i \in \N}$ be its sequence of configurations.
  Correspondingly, let $\rho' = t'_1 t'_2 \dots$, where for each $i \in \N$
  $t'_{i+1} = (q_i,C_i) \xrightarrow[A']{\sigma_i,\alpha_{E_i},\asgn_i}
  (q_{i+1},C_{i+1})$, with $C_i = [\tau_i]$ and $E_i = \{r \in R \mid \tau_i(r)
  = d_i\}$. Then, again by induction, we can show that $\rho'$ is a run of $A'$
  over $w$, whose sequence of configurations is ${\{((q_i,C_i),\tau_i)\}}_{i \in
    \N}$. Moreover, $\rho'$ is accepting if and only if $\rho$ is accepting,
  since $c'(q_i,C_i) = c(q_i)$. Reciprocally, every run $\rho'$ of $A'$ can be
  projected to a run of $A$ by removing the $C_i$, and this preserves
  acceptance. Overall, $L(A) = L(A')$.

  Now, let $\rho = t_1 \dots t_n$ be a partial run of $A'$ over some finite data
  word $w \in \findatawords$ ending in some configuration $((q,C),\tau)$; recall
  that $C = [\tau]$. Let $t = q \xrightarrow{\sigma,\alpha_E,\asgn} q'$ be a
  transition compatible with $\rho$, i.e.\ such that $q$ is the end state of
  $\rho$. If $E = \varnothing$, then $\alpha_E = \bigwedge_{r \in R} r^{\neq}$,
  so any $d \in \data \backslash \tau(R)$ (where $\tau(R)$ denotes the image of
  $R$ by $\tau$) is such that $\tau, d \models \alpha_E$. If $E \neq
  \varnothing$, then by construction $E$ corresponds to an equivalence class of
  $C$, so $\forall r,r' \in E, \tau(r)=\tau(r')$ and $\forall r \in E, \forall
  r' \notin E, \tau(r) \neq \tau(r')$. Thus, by letting $d = \tau(r)$ for some
  $r \in E$ (its choice does not matter), we have that $\rho t$ is a partial run
  of $A'$ over $w d$. Overall, $A'$ is locally concretisable, i.e.
  property~(\ref{itm:locallyConcret}) holds.

The last step concerns property~(\ref{itm:NoOutputAssgn}). Intuitively, the idea is that if the data read corresponds to a data stored in some register, then the assignment can be replaced by keeping in memory
a relation between registers. This idea is merely an adaptation of the conversion from register
  automata (``$M$-automata'', in their terminology) to finite-memory
  automata~\cite{Kaminski:1994:FA:194527.194534}. The states can be enriched with
  the right information to deal with these additional relations.
%  The only difference is that
%  since the registers are initially assigned with the same value, we cannot rely
%  on the injectivity of the assignment. A technical difference lies in the case of
%  locally fresh data values, \emph{i.e.} not contained in any register.
\end{proof}

In order to solve the unbounded register synthesis problem, we
resort to a synthesis problem for data-free specifications.
In that framework, when specifications are described by means of parity
automata, synthesis problems can be solved using reductions
to parity games. We thus quickly recall the notion of parity game.
For a complete presentation, we refer the reader to~\cite{Apt:2011:LGT:1972520}.

A two-player parity game is given as a finite graph, in which
vertices are partitioned among the two players, together with an initial vertex.
A colouring function associates
with each vertex an integer. It is used to define the winning plays as follows: a play is winning iff the maximum colour appearing infinitely often is even.

In the sequel, we will use the parity game associated with a \dra $A$,
which is denoted as $G_A$. It is
is defined as follows: its set of vertices is exactly that of $A$. Player Adam
owns input vertices, and the associated input transitions, while player Eve
owns output vertices/transitions. The colouring function is that of $A$,
and the initial vertex is the initial state of $A$.
\begin{prop}%
  \label{prop:implDRAnew}
  Let $A$ be a \draa in good form. Then, the following are equivalent:
  \begin{enumerate}
  \item\label{itm:trdImpl} $L(A)$ is realisable by a register transducer with
    as many registers as $A$
  \item\label{itm:impl} $L(A)$ is realisable by an
    implementation\footnote{Recall that implementations are defined in
      \autoref{sec:specImplReal}.} $I : {(\inalph \times \data)}^* \rightarrow
    \outalph \times \data$
  \item\label{itm:winStrat} Eve wins the parity game $G_A$ associated with $A$
  \end{enumerate}
\end{prop}

\begin{proof}
We start with a preliminary remark on \draa.
As $A$ is a \draa, every output
transition has a test with at least one equality constraint ($r^=$ for some $r$),
and thus, as $A$ is in good form (property $(4)$), the assignment of output
transitions are all empty.
Note that~\ref{itm:trdImpl} $\Rightarrow$~\ref{itm:impl} is immediate.

\subsection*{\texorpdfstring{From the parity game $G_A$ to the realisability of $L(A)$:~\ref{itm:winStrat} $\Rightarrow$~\ref{itm:trdImpl}}{From the parity game GA to the realisability of L(A): 3 => 1}} % chktex 36
Assume Eve wins the game $G_A$. Parity games admit memoryless strategies, i.e.
strategies whose actions only depend on the current state of the game. We can
thus consider a memoryless winning strategy for Eve, which we denote by a
mapping $\chi$ from output vertices to output edges of the game, i.e.\ from
output states to output transitions of $A$.

We now detail how we define from $\chi$ a register transducer $T_\chi$ with
$R^A$ as set of registers:
\begin{itemize}
\item States are those of $A$
\item The initial state is that of $A$
\item Transitions are defined as follows. Consider some input state $p$ and some
  transition $t_\inp$ from $p$ to $q$. By definition of $A$, $q$ is an output
  state, and we let $t_\outp = \chi(q)$ be the transition given by Eve's
  strategy.

  We write $t_\inp=(p,\sigma,\phi,\asgn,q)$ and
  $t_\outp=(q,\sigma',\phi',\asgn',q')$. Thanks to our initial comment on the
  form of output transitions of $\draa$ in good form, there exists a register
  $r$ appearing with an equality constraint in the test $\phi'$ of the
  transition $t_\outp$, and we have $\asgn'=\varnothing$. Then, we add to
  $T_\chi$ the transition $p\xrightarrow{\sigma,\phi \mid \asgn,\sigma',r}q'$.
\end{itemize}

\noindent
Observe that $T$ is indeed a register transducer as for each state $p$, it only
uses transitions outgoing from $p$ in $A$, hence it is deterministic as $A$ was.

We claim that $T_\chi$ realises $L(A)$. Consider some input data word, and the
behaviour of $T_\chi$ on this data word. As $A$ is in good form, it is complete
on its input states. This entails that this run is infinite. It corresponds to a
play in $G_A$ compatible with Eve's strategy $\chi$. As $\chi$ is a winning
strategy, this implies that the run is accepting, hence corresponds to some
accepting run of $A$, yielding the result.

\subsection*{\texorpdfstring{From the realisability of $L(A)$ to the parity game $G_A$:~\ref{itm:impl} $\Rightarrow$~\ref{itm:winStrat}}{From the realisability of L(A) to the parity game GA: 2 => 1}} % chktex 36 chktex 13
Assume that $L(A)$ is realisable by an implementation $I : {(\inalph \times
\data)}^* \rightarrow \outalph \times \data$. We let $f_I :
\datawords(\inalph,\data) \rightarrow \datawords(\outalph,\data)$ be the
function it implements, and naturally extend it to finite words: for $w_\inp \in
\findatawords(\inalph,\data), f_I(w_\inp) = I(w_\inp[1]) I(w_\inp[1:2]) \dots
I(w_\inp[1:\size{w_\inp}])$. Let us build from $I$ a winning strategy $\chi_I$
in $G_A$, with memory ${(\inalph \times \data)}^* \times (Q_S \times
\data^{R_A})$.

We define $\chi_I$ by induction, and show that when $\chi_I$ is
in memory state $(w_\inp,(q,\tau))$, the finite sequence of transitions
constructed so far is a partial run of $A$ over $\interl{w_\inp}{f_I(w_\inp)}$
ending in configuration $(q,\tau)$. Initially, $\chi_I$ has memory
$(\varepsilon, (q_0,\tau_0))$.

Now, assume $\chi_I$ is in state $(w_\inp,
(q,\tau))$, and Adam just played $(\sigma_\inp, \tst, \asgn)$. Then, Eve picks
some data $d_\inp \in \data$ such that $\tau, d_\inp \models \tst$. Such data
exists since $A$ is locally concretisable and the finite sequence of transitions
constructed so far is the partial run over some data word. Let $(q'', \tau'')$
be the successor configuration of $(q,\tau)$ in $A$ on reading $d_\inp$, i.e.
$(q,\tau) \xrightarrow[A]{\sigma_\inp,d_\inp} (q'',\tau'')$, and let $w'_\inp = w_\inp (\sigma_\inp, d_\inp)$. Now, let
$(\sigma_\outp, d_\outp) = I(w'_\inp)$.
Correspondingly, let $t_\outp$ be the transition taken from $(q'',\tau'')$ on
reading $(\sigma_\outp, d_\outp)$, i.e.\ such that $(q'', \tau'')
\xrightarrow[t_\outp]{\sigma_\outp, d_\outp} (q',\tau')$. Such transition
exists: let $w \in \datawords(\inalph,\data)$ be some infinite suffix
that we append to $w'_\inp$. Since $I$ is an implementation, $f_I$ is total and
we know that $\interl{w'_\inp w}{f_I(w'_\inp w)} \in L(A)$, which means that
$\interl{w'_\inp w}{f_I(w'_\inp w)}$ admits an accepting run in $A$.
In particular, its prefix $\interl{w'_\inp}{f_I(w'_\inp)}$ admits a partial
run in $A$, and its last transition is $t_\outp$ (such partial run is unique
since $A$ is deterministic).

Then, Eve plays $t_\outp$ in $G_A$ and updates her memory to $\left(w'_\inp,
  (q',\tau')\right)$. The invariant indeed holds, as the play constructed so far
is a partial run of $A$ over $\interl{w'_\inp}{f_I(w'_\inp)}$ ending in configuration
$(q',\tau')$.

$\chi_I$ is indeed a strategy, as it is defined for any possible sequence of
actions of Adam. It remains to show that it is winning. Let $\rho$ be a play
consistent with $\chi_I$, which is also a run of $A$ by definition of $G_A$. We
need to show that $\rho$ is accepting. We define $w \in {(\inalph \times
\data)}^\omega$ as $w[i] = w_\inp^i[i]$, where $w_\inp^i$ is the input word
stored in memory at step $i$ of the play (i.e.\ such that $\chi_I$ is in state
$(w_\inp^i, (q_i,\tau_i))$ for some $(q_i,\tau_i)$ after receiving $i$ actions
of Adam). We then know that for all $i \in \N$, $\rho[:i]$ is a partial run of
$A$ over $\interl{w[:i]}{f_I(w[:i])}$, so $\rho$ is a run of $A$ over
$\interl{w}{f_I(w)}$. Since $I$ is an implementation, such run is accepting,
i.e.\ satisfies the parity condition, which means that $\rho$ also satisfies the
parity condition; it is thus winning. As a consequence, $\chi_I$ is a winning
strategy in $G_A$.
\end{proof}

\begin{thm}\label{thm:unboundedDRA} {\normalfont$\real(\draa, \rt)$} is {\normalfont\textsc{ExpTime-c}}.
\end{thm}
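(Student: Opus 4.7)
The plan is to combine Lemma~\ref{lem:trim} with Proposition~\ref{prop:implDRAnew} to reduce realisability to solving a finite parity game of controlled size, and to derive the matching lower bound from a known \textsc{ExpTime}-hard problem.

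For the upper bound, given a \draa $A$ with $n$ states, $r$ registers and $d$ priorities, I would first apply Lemma~\ref{lem:trim} to obtain an equivalent \draa $A'$ in good form; its size is at most $n \cdot B_r$ (where $B_r$ is the $r$-th Bell number), hence exponential in $|A|$, but its priority count remains $d$. By Proposition~\ref{prop:implDRAnew}, $L(A) = L(A')$ is realisable by a register transducer iff Eve wins the finite parity game $G_{A'}$, whose underlying graph has size polynomial in $|A'|$, and if realisable it suffices to use $r$ registers. It then remains to solve $G_{A'}$: using a standard parity-game algorithm (e.g.\ Zielonka's, with complexity $|V|^{O(d)}$, or a quasi-polynomial algorithm), the running time is bounded by $2^{O(\mathrm{poly}(|A|))}$, which places the whole procedure in \textsc{ExpTime}. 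From a winning memoryless strategy for Eve, the construction in the proof of Proposition~\ref{prop:implDRAnew} yields a register transducer with $r$ registers, witnessing realisability in $\rt$.

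For the lower bound, \textsc{ExpTime}-hardness is inherited via a reduction from a classical \textsc{ExpTime}-complete problem. A natural candidate is the realisability of data-free nondeterministic parity specifications (Theorem~\ref{thm:reactivesynthesis}): given such an NPA $N$, one can construct a \draa $A$ of polynomial size in $N$ that tracks the subset of $N$'s states reachable on the current input/output prefix, using registers to succinctly represent the powerset via equivalence classes, and whose input-driven output transitions encode the acceptance game of $N$. Alternatively, one can reduce from a succinct two-player game (e.g.\ a safety/parity game whose exponentially-large arena is described symbolically), where the registers of the \draa compactly encode the current vertex. In either case, the exponential succinctness gained by register equivalence classes is matched by the exponential blow-up in Lemma~\ref{lem:trim}, which is precisely what makes the problem \textsc{ExpTime}-hard rather than polynomial.

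The main obstacle is the hardness direction: one must design an encoding whose correctness relies on the \draa being deterministic with input-driven outputs, which is more restrictive than general \dra. The subset/equivalence-class encoding of the exponentially-large target game has to respect these constraints, in particular ensuring that every output transition has a test of the form $r^=$, so that the reduction lands inside the \draa fragment rather than in the larger \dra class.
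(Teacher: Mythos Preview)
Your upper bound is correct and matches the paper's argument essentially verbatim: apply Lemma~\ref{lem:trim}, invoke Proposition~\ref{prop:implDRAnew}, and solve the resulting parity game; the same complexity analysis goes through.

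The hardness direction, however, is not a proof but a hope. Neither of your two candidates is worked out, and the first one in particular is problematic. You propose to encode the subset construction for a data-free nondeterministic parity automaton $N$ inside a \draa by using ``registers to succinctly represent the powerset via equivalence classes''. But the states of $N$ are finite symbols, not data; a \draa has no mechanism for manipulating a bit-vector of length $|Q_N|$ other than storing it in its own finite control, which kills the succinctness you need. Register equivalence classes track equalities among \emph{data values}, and there is no obvious way to make the input data drive an update of a subset of $N$-states while keeping the automaton deterministic with input-driven outputs. The second candidate (``a succinct two-player game'') is just a restatement of what you want, not a reduction.

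The paper's hardness proof is a concrete reduction from acceptance of alternating Turing machines with linearly bounded tape, which is the canonical \textsc{ExpTime}-complete problem. The key idea, adapted from the \textsc{PSpace}-hardness of \dra emptiness in~\cite{DBLP:journals/tocl/DemriL09}, is to use two distinguished data $d_0 \neq d_1$ as the bits $0,1$ and one register per tape cell, so that a register valuation encodes a tape content of length $|\mathcal{M}|$. The input player supplies the successive configurations and resolves universal branching, while the output player resolves existential branching; a polynomial-size \draa checks successorship cell by cell using equality tests against $r_0,r_1$. All output transitions are of the form $r^=$ (they just pick a transition symbol, with the data forced to equal a fixed register), so the construction lands squarely inside \draa. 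This is the missing ingredient: registers are used to store an exponential amount of \emph{tape} information via data equality, not to simulate a powerset of automaton states.
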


\begin{proof}
  First, we put $A$ in good form thanks to Lemma~\ref{lem:trim}, resulting in
  some \draa $B$ exponentially bigger. Then, by
  Proposition~\ref{prop:implDRAnew}, it suffices to solve the parity game $G_B$.
  It is well-known to be possible in time $O(n^d)$ where $n$ is the number of
  states and $d$ the number of priorities. If $n_A$ denotes the number of states
  of $A$ and $d$ its number of priorities, then $B$ has $n_A \cdot 2^{|R|^2}$
  states and the same number of priorities $d$, hence checking the realisability
  of $A$ can be done in time $O(n_A^d \cdot 2^{d \cdot |R|^2})$, which is
  exponential with respect to the size of the input.

  \subsubsection*{Hardness} The following proof is an adaptation of the one
  establishing \textsc{PSpace}-hardness of the nonemptiness problem for \dra
  presented in~\cite[Theorem 5.1]{DBLP:journals/tocl/DemriL09}. Here, we use the
  input part to simulate universal transitions, and the output part to simulate
  nondeterministic ones, hence simulating alternation, which yields an
  \textsc{ExpTime} lower bound.

  Thus, we reduce from the halting problem of alternating Turing machines over
  a binary alphabet with a linearly bounded tape. An alternating Turing machine
  is a tuple $\tmm = \langle Q, q_i, \delta \rangle$, where:
  \begin{itemize}
  \item $Q$ is a finite set of \emph{states}, partitioned into
    \emph{existential} ($Q_\exists$) and \emph{universal} ($Q_\forall$) states:
    $Q = Q_\exists \uplus Q_\forall$, where $q_i \in Q_\forall$ is the
    \emph{initial} state
  \item $\delta: Q \times \{0,1\} \rightarrow 2^{Q \times \{0,1\} \times
      \{-1,1\}}$ is the \emph{transition function}.
    % , which, without loss of
    % generality, alternates between nondeterministic and universal states:
    % $\delta = \delta_\exists \uplus \delta_\forall$, where $\delta_\exists:
    % Q_\exists \times \{0,1\} \rightarrow 2^{Q_\forall \times \{0,1\} \times
    % \{-1,1\}}$ and $\delta_\forall: Q_\forall \times \{0,1\} \rightarrow
    % 2^{Q_\exists \times \{0,1\} \times \{-1,1\}}$
  \end{itemize}
  A configuration of $\tmm$ is then a triple $ c=(q,i,w)$, where $q \in Q$
  is the machine state, $i \in \{0, \dots, \size{\tmm}-1\}$ is the head
  position, and $w \in {\{0,1\}}^{\size{\tmm}}$ is the tape content. It is
  \emph{existential} if $q \in Q_\exists$ and \emph{universal} if $q \in Q_\forall$. A configuration
  $(q',i',w')$ is a successor of $(q,i,w)$ if there exists $(p,
  a, m) \in \delta(q,w[i])$, $p=q'$, $i'= i+m \in \{0, \dots, \size{\tmm} - 1\}$
  and $w'$ is such that $\forall j \neq i$, $w'[j] = w[j]$ and $w[i] = a$.
  $t = q \xrightarrow{w[i],a,m} p$ is called the \emph{associated
    transition}. A \emph{run} of $\tmm$ is then a tree whose nodes are
  configurations and whose branches can be finite or infinite, rooted in the initial
  configuration $(q_i, 0, 0^{\size{\tmm}})$, and whose nodes satisfy the following properties:
  \begin{enumerate}
  \item If the node is an existential configuration $c_\exists$, then it has exactly one child, which is a successor configuration of $c_\exists$.
  \item If the node is a universal configuration $c_\forall$, then its children
    are all its successor configurations.
  \end{enumerate}
  Note that a branch is finite if and only if it ends in a universal
  configuration with no successor. The machine $\tmm$ \emph{halts} if it admits
  a run which is a finite tree (i.e.\ whose branches all end in a universal
  configuration with no successors). The following problem is
  \textsc{ExpTime}-hard~\cite{Chandra:1981:ALT:322234.322243}: given an
  alternating Turing machine $\tmm$, decide whether $\tmm$ halts.

  Finally, a computation is a finite sequence of successive configurations (i.e.
  a finite path in a run). Let
  $(q_0, i_0, w_0) \dots$  $(q_n, i_n, w_n)$ be a computation of $\tmm$, and
  $t_0 \dots t_{n-1}$ the sequence of associated transitions. We encode such
  computation by the following data word over the alphabet $Q \uplus \delta
  \uplus \{-\}$:
  \begin{equation*}
    (-,d_0) (-,d_1) a_0^0 a_1^0 \dots a_{\size{\tmm}-1}^0 t_0 a_0^1 a_1^1 \dots a_{\size{\tmm}-1}^1 t_1 \dots t_{n-1} a_0^n a_1^n \dots a_{\size{\tmm}-1}^n
  \end{equation*}
  where $d_0 \neq d_1 \in \data$ are two distinct data respectively encoding
  letters $0$ and $1$, and we have $\lab(a_l^k) = q_k$ if $l = i_k$ and
  $\lab(a_l^k) = -$ otherwise. Then, $\dt(a_l^k) = d_0$ if $w_k[l] = 0$ and
  $\dt(a_l^k) = d_1$ if $w_k[l] = 1$. $\dt(t_k)$ does not matter.

  Now, as in~\cite{DBLP:journals/tocl/DemriL09}, we can construct a \dra
  $A_{\tmm}$ which accepts a data word iff it has a prefix that encodes a
  computation of $\tmm$ from the initial state to a state with no successor.
  Indeed, the transitions are part of the input, so they do not have to be
  guessed: neither nondeterministic nor universal branching is needed here (they
  will respectively be simulated by the output and input player). For
  completeness, we describe the construction: $A_{\tmm}$ has memory $Q$,
  along with an $\size{\tmm}$-bounded counter $l$ to keep track of the position
  of the reading head in $w_k$, a variable $i$ taking its values in $\{0, \dots,
  \size{\tmm}-1\}$ used to store the value of $i_k$ and a variable $t$ taking
  its values in $\delta$ to memorise $t_k$; which overall yields a
  $O(\size{\tmm}^4)$ memory. Its finite alphabet is $Q \uplus \delta \uplus
  \{-\}$, and it has $\size{\tmm} + 2$ registers: $r_0$ and $r_1$ respectively
  store $d_0$ and $d_1$, and, for all $0 \leq l < \size{\tmm}$, $r'_l$
  successively stores the different values of $w_k[l]$ for $0 \leq k \leq n$.
  Then, a run of $A_{\tmm}$ is as follows: initially, $A_{\tmm}$ stores $d_0$
  and $d_1$, while checking that they are distinct. Then, it checks that $w_0 =
  0^{\size{\tmm}}$. To check successorship, while maintaining the invariant that at
  any step $k$, $r'_l$ contains $w_k[l]$, the automaton, when reading $t_k = q
  \xrightarrow{c,a,m} p$, checks that $q = q_k$ (it was stored as the target of
  $t_{k-1}$), $c = w_k[i_k]$ (i.e.\ that $r'_{i_k}$ contains $d_c$), and
  updates the value of $i_k$ to $i_{k+1} = i_k + m_k$, while checking that $i_k \in \{0,
  \dots, \size{\tmm}-1\}$. Then, with the help of its registers and its counter
  $l$, it checks that $w_{k+1}[l] = w_k[l]$ for all $l \neq i_{k+1}$, and that
  $w_{k+1}[i_{k+1}] = d_a$.

  From such automaton, by adding $\#$s to enforce the alternation between input
  and output, we can build a specification automaton such that the input player
  provides the encoding of the successive configurations, and resolves the
  universal branching, and the output player has to resolve nondeterminism (i.e.
  chooses which nondeterministic transition to take).
  Then, if the input player can force the computation to go on ad infinitum, he
  wins, otherwise (if either the provided encoding is not correct, or if the
  computation is finite), the output player wins. Formally:
  \begin{align*}
    S = &\{(-,d_0) \# (-,d_1) \# \interl{c_0}{\#^{\size{\tmm}}} t_0 \# \interl{c_1}{\#^{\size{\tmm}}} \# t_1 \interl{c_2}{\#^{\size{\tmm}}} t_2 \# \dots \interl{c_n}{\#^{\size{\tmm}}} \#^\omega \mid \\
        &\phantom{\{}d_0 \neq d_1 \text{ and }c_0t _0 c_1 t_1 c_2 t_2 \dots t_{n-1}c_n\text{ is the encoding of a computation of }\tmm \} \\
    {}\cup{} & \left\{\interl{w}{w'} \;\middle|\; \begin{array}{c} \text{there exists a prefix of $w$ which is not} \\ \text{the encoding of a computation of }\tmm \end{array} \right\} \\
    {}\cup{} & \{\interl{(-,d_0) \# (-,d_1) w}{w'} \mid d_0 = d_1\} % chktex 1
  \end{align*}
  The data corresponding to the $\#$ and $t_i$ do not matter, and are not
  depicted. Note that the even (i.e.\ universal) transitions are picked by the
  input player, while the odd (i.e.\ nondeterministic) transitions are picked by
  the output player.

  Now, if $\tmm$ halts, $A$ admits an implementation, which behaves as follows:
  it first checks that the $d_0$ and $d_1$ given as input are indeed distinct.
  Then, it checks on-the-fly that the given input is indeed an encoding of the
  initial configuration, while outputting $\#$s. It then checks that $c_1$ is
  indeed a successor of $c_0$ following $t_0$, again while outputting $\#$s.
  Then, if it receives a $\#$  as input, it picks some $t_1$ which is a witness
  that $c_0$ is indeed accepting, and so on. If, at some point, the given input
  is not a valid encoding, then it behaves arbitrarily (e.g.\ by outputting only
  $\#$s).

  Conversely, if $\tmm$ does not halt, then, by choosing an input whose
  universal transitions are witnesses that $c_0$ is not accepting, then either
  the implementation provides some non-admissible output at some point, or the
  computation goes ad infinitum, which breaks the specification.

  For readers familiar with game-theoretic formulations, winning strategies in
  the synthesis game of $A_{\tmm}$ are in one-to-one correspondence with halting
  runs of $\tmm$.
\end{proof}

As a consequence of the fact that if a \draa is realisable, then it is so by
a register transducer with the same number of registers, we obtain
the following corollary:

\begin{cor}%
\label{cor:boundedDRA}
 Let $k\ge r$ be two integers. We denote by $\draa[r]$ the class of \draa with $r$ registers.
 {\normalfont$\real(\draa[r], \rt[k])$} is in {\normalfont\textsc{ExpTime}}.
\end{cor}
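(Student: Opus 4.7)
The plan is to reduce $\real(\draa[r], \rt[k])$ directly to $\real(\draa, \rt)$, which Theorem~\ref{thm:unboundedDRA} already places in \textsc{ExpTime}. So I would simply exhibit, for each \draa $A$ with $r$ registers, an equivalence between realisability by $\rt[k]$ and realisability by $\rt$.

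The forward implication is trivial: since $k$ is a finite bound, $\rt[k] \subseteq \rt$, so any transducer in $\rt[k]$ realising $L(A)$ is witness to realisability by $\rt$. For the converse, I would invoke the second clause of Proposition~\ref{prop:implDRAnew}: if $L(A)$ is realisable by some register transducer, then it is realisable by one having as many registers as $A$, i.e.\ at most $r$ registers. Since $k \geq r$ by assumption, that transducer belongs to $\rt[k]$, giving the converse. This equivalence is the heart of the argument; there is really no obstacle, as the combinatorial content is already packaged in Proposition~\ref{prop:implDRAnew}.

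Having established the equivalence, deciding $\real(\draa[r], \rt[k])$ reduces to deciding $\real(\draa, \rt)$ on the same instance $A$, which by Theorem~\ref{thm:unboundedDRA} takes time exponential in $|A|$. Since the reduction is the identity on the input, the overall complexity remains in \textsc{ExpTime}, yielding the corollary.
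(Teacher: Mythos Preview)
Your proposal is correct and follows essentially the same route as the paper, which presents the corollary as an immediate consequence of the register bound in Proposition~\ref{prop:implDRAnew} together with Theorem~\ref{thm:unboundedDRA}. The only minor technicality you gloss over (as does the paper's one-line justification) is that Proposition~\ref{prop:implDRAnew} is stated for \draa in good form, so one first applies Lemma~\ref{lem:trim}; since that lemma preserves the number of registers, the argument is unaffected.
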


\section{Bounded Synthesis: A Generic Approach}

In this section, we study the setting where target implementations are register
transducers in the class $\rt[k]$, for some $k \geq 0$ that we now fix for the
whole section. For the complexity analysis, we assume $k$ is given as input, in
unary. Indeed, describing a $k$-register automaton in general requires $O(k)$
bits, and not $O(\log k)$ bits. We prove the decidable cases of the first line
of Table~\ref{tbl:results} (page~\pageref{tbl:results}), by reducing the
problems to realisability problems for data-free specifications.

\subsection{Abstract Actions}\label{subpar:abstractActions}
We let $R_k = \{1,\dots,k\}$ be a set of $k$ registers.
Our aim is to reduce the problem to a finite alphabet problem. First,
since the set of test formulas over $R_k$ is infinite and there are
doubly exponentially many non-equivalent formulas over $R_k$, we
rather synthesise transducers whose tests are maximally consistent
conjunctions of atoms of the form $r^=$ or $r^{\neq}$. Such
conjunctions can be identified as subsets of $R_k$ in a natural way,
e.g.\ for $k=3$,  the test $r_1^=\wedge r_2^{\neq} \wedge r_3^=$ is
identified with the set $\{1,3\}$. We call them explicit tests and
denote them by the capital letter $E$. An
explicit test $E\subseteq R_k$ is converted into the (implicit) test
$\phi_E = \bigwedge_{r\in E} r^=\wedge \bigwedge_{r\not\in E} r^{\neq}$.
Explicit tests  are for instance used in~\cite{DBLP:conf/csl/Segoufin06}.

We let $\Tst_k = \Asgn_k = 2^{R_k}$. The finite input actions are
$A_\inp^k = \inalph\times \Tst_k$ which corresponds to picking a
label and a test over the $k$ registers, and the output actions are
$A_\outp^k = \outalph\times \Asgn_k \times R_k$, corresponding to
picking some output symbol, some assignment and some register whose
content is to be output.

An alternating sequence of actions
$\overline{a} =
(\sigma_\inp^1,\tste_1)(\sigma_\outp^1,\asgn_1,r_1)\dots
\in {(A_\inp^{k}A_\outp^{k})}^\omega$ abstracts a set of relational data
words of the form
$w = (\sigma_\inp^1,d_\inp^1)(\sigma_\outp^1,d_\outp^1)\dots \in\relwords(\inalph,\outalph, \data)$ via a
compatibility relation that we now define. We say that $w$ is
\emph{compatible} with $\overline{a}$ if there exists a sequence of
register configurations $\tau_0\tau_1\dots \in {(R_k\rightarrow
\mathcal{D})}^\omega$ such that $\tau_0 = \tau_0^{R_k}$ and
for all $i\geq 1$, $\tau_i,d_{\inp}^i\models
\tste_i$, $d_\outp^i = \tau_i(r_i)$ and $\tau_{i+1} =
\text{next}(\tau_i,d_\inp^i,\asgn_i)$. In other words, $w$ is \emph{compatible}
with $\overline{a}$ if there exists some $k$-register transducer and a run
$\rho = t_0 t_1 \dots$ such that for all $i$, $t_i$ is of the form $t_i = q_i
\xrightarrow{\sigma_\inp^i, \tste_i \mid \sigma_\outp^i, \asgn_i, r_i} q_{i+1}$
for some $q_i, q_{i+1} \in Q_T$.
Note that this sequence
is unique if it exists. We denote by
$\textsf{Comp}(\overline{a})$
the set of relational data words
compatible with $\overline{a}$.
Given a specification $S$, we let $W_{S,k} = \{
\overline{a}\mid \textsf{Comp}(\overline{a})\subseteq S\}$. The set
$W_{S,k}$ is then a specification over the finite input and output alphabets
$A_\inp^k$ and $A_\outp^k$.

% \leo{The following Thm can be made more general: if we want to generate transducers with transitions labelled over $I \mid O$, we can define a compatibility relation which states that $w$ is compatible with $a$ iff there exists a transducer which admits a run labelled with the actions of $a$ over $w$ (i.e., taking $w_\inp$ as input, produces $w_\outp$ as output).}
\begin{thm}[Transfer]\label{thm:infinite2finite}
    Let $S$ be a data word specification. The following are
    equivalent:
\begin{enumerate}
    \item $S$ is realisable by a transducer with $k$
    registers.
  \item The (data-free) word specification
    % $W_{S,k}$ has universal input domain and, if the latter holds,
    $W_{S,k}$ is realisable by a (register-free) finite transducer.
  \end{enumerate}
\end{thm}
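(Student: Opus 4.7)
The proof is by two implications. For each direction the idea is to transfer a witness transducer, with a natural translation between abstract actions over the finite alphabets $A_\inp^k, A_\outp^k$ and concrete register operations.

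\textbf{From (1) to (2).} Let $T$ be a $k$-register transducer realising $S$. I will build a finite transducer $T'$ over $A_\inp^k$ and $A_\outp^k$ with the same state space and initial state as $T$. Since $T$ is deterministic and complete on its inputs, for every state $q$, every label $\sigma_\inp \in \inalph$, and every explicit test $E\subseteq R_k$, there is a unique transition of $T$ whose test is implied by $\phi_E$; call it $(q,\sigma_\inp,\phi,\asgn,\sigma_\outp,r,q')$. I then define $T'$ to send $(q, (\sigma_\inp,E))$ to the output action $(\sigma_\outp,\asgn,r)$ while moving to state $q'$. To conclude $L(T')\subseteq W_{S,k}$, I pick any input sequence $u_\inp\in (A_\inp^k)^\omega$, let $\overline{a}=\interl{u_\inp}{f_{T'}(u_\inp)}$, and observe that any $w\in\Comp(\overline{a})$ is, by the definition of compatibility, exactly the unique relational data word produced by $T$ on $\projin(w)$; hence $w\in L(T)\subseteq S$.

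\textbf{From (2) to (1).} Conversely, let $T'$ be a finite transducer realising $W_{S,k}$. I build a $k$-register transducer $T$ with the same states as $T'$ and register set $R_k$. On reading an l-data $(\sigma_\inp,d)$ in state $q$ with register valuation $\tau$, $T$ computes the explicit test $E_{\tau,d}=\{r\in R_k\mid \tau(r)=d\}$, consults $T'$ on the abstract input $(\sigma_\inp, E_{\tau,d})$ to obtain an abstract output $(\sigma_\outp,\asgn,r)$ together with a successor state, updates $\tau$ according to $\asgn$ and $d$, and outputs $(\sigma_\outp, \tau'(r))$. Formally this requires encoding each explicit test $E$ as the maximally consistent formula $\phi_E$ and taking, for each $(q,\sigma_\inp,E)$, a transition of $T$ labelled $\phi_E$ executing the abstract action prescribed by $T'$; since the $\phi_E$ form a partition of valuation–data pairs, the resulting $T$ is deterministic and complete. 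For any input $w_\inp\in\datawords(\inalph,\data)$, the run of $T$ on $w_\inp$ produces an abstract sequence $\overline{a}\in(A_\inp^k A_\outp^k)^\omega$ which is exactly $\interl{\alpha(w_\inp)}{f_{T'}(\alpha(w_\inp))}$, where $\alpha$ is the natural abstraction sending each l-data to its label and its explicit test at the current configuration. The produced concrete word $w=\interl{w_\inp}{f_T(w_\inp)}$ lies in $\Comp(\overline{a})$ by construction, and $\overline{a}\in W_{S,k}$ since $T'$ realises $W_{S,k}$, so $w\in S$.

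\textbf{Main obstacle.} The conceptually delicate point is the direction (2)$\Rightarrow$(1): one must check that reading an abstract action is entirely determined by the current register valuation and the incoming data, so that $T$ is well defined as a deterministic, complete register transducer with exactly $k$ registers. This hinges on the facts that the explicit tests $\phi_E$ form a partition of the set of pairs $(\tau,d)$ and that the abstract action produced by $T'$ depends only on the state and the explicit test, not on any further information about concrete data. Once this is settled, the correctness of both directions reduces to unfolding the definition of $\Comp(\cdot)$ along the runs of $T$ and $T'$ and invoking the assumption that the respective transducers realise their specifications.
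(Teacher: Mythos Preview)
Your proposal is correct and follows essentially the same approach as the paper: in both directions you exploit the bijection between $k$-register transducers with explicit tests and finite transducers over $A_\inp^k, A_\outp^k$, and then verify the inclusion $L(T')\subseteq W_{S,k}$ (resp.\ $L(T)\subseteq S$) by unfolding the definition of $\Comp(\cdot)$ along a run. The paper phrases the translation more tersely (``$T$ can be seen as a finite transducer $T'$'' and vice versa, after making tests explicit), but the content is the same as your explicit construction via $E_{\tau,d}$ and the unique transition implied by $\phi_E$.
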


\begin{proof}
  Let $T$ be a transducer with $k$ registers realising $S$. The tests
  of $T$ are implicit tests, so in a first step we explicit them,
  possibly by adding new transitions to $T$. Formally, a transition
  $q\myxrightarrow[T]{\sigma_\inp,\phi\mid \sigma_\outp,\asgn,r} q'$ is
  replaced by all the transitions $q\myxrightarrow[T]{\sigma_\inp,E\mid
    \sigma_\outp,\asgn,r} q'$ for all $E\subseteq R_k$ such that
  $\phi_E\Rightarrow \phi$ is true. The resulting
  transducer can be seen as a finite transducer $T'$ over input alphabet $A_\inp^k$ and
  output alphabet $A_\outp^k$. Moreover, since the transition function of $T$ is
  complete, it is also the case of $T'$ (this is required by the definition of
  transducer defining implementations).
  % , and hence its input domain is universal.

  % First, assume by contradiction that $W_{S,k}$ has not a universal domain,
  % i.e.
  % there exists $\overline{i}\in A_\inp^k$ such that for all $\overline{o}\in
  % A_\outp^k$, there exists a relational data word $w\in
  % \relwords(\inalph,\outalph,\data)$ such that $w\in
  % \textsf{Comp}(\interl{\overline{i}}{ \overline{o}})\cap \overline{S}$, where
  % % $\interl{\overline{i}}{\overline{o}} =
  % % \overline{i}[1]\overline{o}[1]\overline{i}[2]\dots$ and
  % $\overline{S}$ is the complement of $S$. Let
  % $\overline{o}=f_{T'}(\overline{i})$ (it exists since $T'$ has universal
  % domain). Let $q_0q_1\dots$ be the sequence of states of the run of $T'$ on
  % $\overline{i}$ (which produces $\overline{o}$). By the choice of
  % $\overline{i}$, there exists a relational data word $w =
  % \interl{w_\inp}{w_\outp} \in
  % \textsf{Comp}(\interl{\overline{i}}{\overline{o}})\cap \overline{S}$. Now,
  % on
  % reading $w_\inp$, $T$ produces $w_\outp$, i.e.\ $w = f_T(w_\inp)$, because
  % $w\in \textsf{Comp}(\interl{\overline{i}}{\overline{o}})$. Indeed, by
  % definition of compatibility, there exists a unique sequence of register
  % configurations $\tau_0\tau_1\dots\in (R_k\rightarrow \data)^\omega$ on $w$
  % which satisfies the conditions of the definition of compatibility. Thus,
  % $(q_0,\tau_0)t_1(q_1,\tau_1)t_2\dots$ is a run of $T$ on $w$, where $t_j$ is
  % the transition $(q_{j-1}, \overline{i}[j], \overline{o}[j],q_j)$. Hence,
  % $w\in
  % S$, since $T$ realises $S$, contradicting $w\not\in S$.

  Let us show that
  % since $W_{S,k}$ has universal domain, the realisability problem of
  % $W_{S,k}$ is well-defined, and we shall see that indeed,
  $W_{S,k}$ is realisable by $T'$, i.e.\ $L(T')\subseteq W_{S,k}$. Take a
  sequence $\overline{a} = a_1e_1a_2e_2\dots \in L(T')$. We show that
  $\textsf{Comp}(\overline{a})\subseteq S$. Let $w\in
  \textsf{Comp}(\overline{a})$. Then, there exists a run $q_0q_1q_2\dots$ of
  $T'$ on $\overline{a}$ since $\overline{a}\in L(T')$. By definition of
  compatibility for $w$, there exists a sequence of register configurations
  $\tau_0\tau_1\dots \in {(R_k\rightarrow \data)}^\omega$ satisfying the
  conditions in the definition of compatibility. From this we can deduce that
  $(q_0,\tau_0)(q_1,\tau_1)\dots$ is an initial sequence of configurations of
  $T$ over $w$, so $w\in L(T)$. Finally, $T$ realises $S$, and therefore
  $L(T)\subseteq S$.

  Conversely, suppose that $W_{S,k}$ is realisable by some finite transducer
  $T'$ over the input (output) alphabets $A_\inp^k$ ($A_\outp^k$). Again, the
  transducer $T'$ can be seen as a transducer $T$ with $k$ registers over
  data words with explicit tests.
  % Note that the definition of a transducer requires that its transition
  % relation
  % is total. This is ensured by the fact the input domain of the implementation
  % defined by $T'$ is universal.
  We show that $T$ realises $S$, i.e., $L(T)\subseteq S$. Let $w\in L(T)$. The
  run of $T$ over $w$ induces a sequence of actions $\overline{a}$ in ${(A_\inp^k
  A_\outp^k)}^\omega$ which, by definition of compatibility, satisfies $w\in
  \textsf{Comp}(\overline{a})$. Moreover, $\overline{a}\in L(T')$. Hence, since
  $T'$ realises $W_{S,k}$, we get $\textsf{Comp}(\overline{a})\subseteq S$, so
  $w\in S$, concluding the proof.
\end{proof}

\subsection{\texorpdfstring{The case of $\ura$ specifications}{The case of URA specifications}}%
\label{sec:boundedURA}

In this section,
% we fix $k\geq 0$ the number of registers of the target implementations.
we show that for any $S$ a data word specification given as some $\ura$, the
language $W_{S,k}$ is effectively $\omega$-regular, entailing the decidability
of $\real(\ura, \rt[k])$, by Theorem~\ref{thm:infinite2finite} and the
decidability of (data-free) synthesis. Let us first prove a series of
intermediate lemmas.

We define an operation $\otimes$ between relational data words $w\in
\relwords(\inalph,\outalph,\data)$ and sequences of actions $\overline{a}\in
{(A_\inp^{k}A_\outp^{k})}^\omega$ as follows: $w\otimes \overline{a}\in
\relwords(A_\inp^k,A_\outp^k,\data)$ is defined only if for all $i\geq 1$,
$\lab(w[i]) = \lab(\overline{a}[i])$ where $\lab(\overline{a}[i])$ is the first
component of $\overline{a}[i]$ (a label in $\inalph\cup \outalph$), by
$(w\otimes \overline{a})[i] = (\overline{a}[i],\dt(w[i]))$. Note that such
operation is always defined when $w \in \textsf{Comp}(\overline{a})$.

\begin{lem}\label{lem:compreg}
  The language $L_{k} = \{ w\otimes \overline{a}\mid
  w\in\textsf{Comp}(\overline{a})\}$ is definable by some \nra.
\end{lem}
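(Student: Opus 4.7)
The plan is to construct an \nra $B_k$ with $k$ registers over the finite label alphabet $A_\inp^k \uplus A_\outp^k$ whose language is exactly $L_k$. The automaton will simulate the sequence of register valuations $\tau_0\tau_1\dots$ described in the definition of $\textsf{Comp}(\overline{a})$, using its own $k$ registers as a copy of the $k$ registers of the (hypothetical) transducer, and using its state to track whether the next position is an input or an output position.

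The main subtlety is a small temporal mismatch: although the assignment $\asgn_i$ is carried by the output action at position $2i$, the compatibility condition requires it to act on the input datum $d^i_\inp$ read at position $2i-1$. A register automaton cannot copy between registers, so the assignment has to be executed while $d^i_\inp$ is the current datum. I would resolve this by letting $B_k$ \emph{nondeterministically guess} $\asgn_i$ at position $2i-1$, apply it immediately on $d^i_\inp$, remember the guess in the finite control, and then verify the guess at position $2i$ via a finite-label matching.

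Concretely, the states are $q_\inp$ and $\{q_\outp^A \mid A\subseteq R_k\}$, with initial state $q_\inp$ and every state given colour $0$ (so every infinite run is accepting). The transitions are:
\begin{itemize}
  \item for every $(\sigma_\inp,\tste)\in A_\inp^k$ and every $A\subseteq R_k$, a transition $q_\inp \xrightarrow{(\sigma_\inp,\tste),\,\phi_\tste,\,A} q_\outp^A$, where $\phi_\tste = \bigwedge_{r\in \tste} r^= \wedge \bigwedge_{r\notin \tste} r^{\neq}$;
  \item for every $(\sigma_\outp,A,r)\in A_\outp^k$, a transition $q_\outp^A \xrightarrow{(\sigma_\outp,A,r),\,r^=,\,\varnothing} q_\inp$ (note that the label itself enforces the match with the state index $A$).
\end{itemize}

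Correctness is then a direct induction. If $w\in\textsf{Comp}(\overline{a})$ with witnessing valuation sequence $\tau_0\tau_1\dots$, then the run of $B_k$ which, at each input position $2i-1$, guesses $A_i$ to be the assignment $\asgn_i$ announced at position $2i$, maintains exactly the valuations $\tau_i$ in its registers; so every input test $\phi_{\tste_i}$ and every output test $r_i^=$ is satisfied, and the run is accepting. Conversely, an accepting run of $B_k$ on $w\otimes\overline{a}$ induces a sequence of register valuations meeting all the clauses of the compatibility definition, so $w\in\textsf{Comp}(\overline{a})$. The only (mild) obstacle in writing this up is the mismatch between the position where $\asgn_i$ appears in $\overline{a}$ and the position where it must be applied; resolving it via the nondeterministic guess above is the whole point of using \nra rather than \dra for $B_k$.
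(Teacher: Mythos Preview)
Your construction is correct and is essentially identical to the paper's own proof: the paper also builds an \nra with $k$ registers, a single input state, and one output state per subset $\asgn\subseteq R_k$; on an input position it performs the test carried by the label, nondeterministically guesses and applies an assignment, remembers it in the state, and on the following output position checks the label's assignment matches the remembered one and tests $r^=$. Your explicit discussion of the temporal mismatch and why nondeterminism is needed is exactly the point of the paper's guess-and-verify step.
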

\begin{proof}
  We define an \nra with $k$ registers which roughly follows the actions it
  reads on its input. Its set of states is $\{q\}\cup \Asgn_{R}$, with initial
  state $q$. In state $q$, it is only allowed to read labelled data in
  $A_\inp^k\times \data$. On reading $(\sigma_\inp, \tst, d)$, it guesses some
  assignment $\asgn$, performs the test $\tst$ and the assignment $\asgn$ and
  goes to state $\asgn$. In any state $\asgn \in \Asgn_{R}$, it is only allowed
  to read labelled data of the form $(\sigma_\outp,\asgn,r,d)$, for which it
  tests whether $d$ is equal to the content of $r$. It does no assignment and
  moves back to state $q$. All states are accepting (i.e.\ have parity $0$).
  Such \nra has size $O(2^{k^2})$.
\end{proof}

Let $S$ be a specification defined by some \ura $A_S$ with set of states $Q$.
The following subset of $L_k$ is definable by some \nra, where $\overline{S}$
denotes the complement of $S$:

\begin{lem}\label{lem:compregS}
  The language $L_{\overline{S},k} = \{ w\otimes \overline{a}\mid
  w\in\textsf{Comp}(\overline{a})\cap \overline{S}\}$ is definable by some \nra.
\end{lem}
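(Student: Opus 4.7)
The plan is to build $L_{\overline{S},k}$ as the intersection of two \nra-recognisable languages, one enforcing compatibility (which we already have from Lemma~\ref{lem:compreg}) and one enforcing that the projection onto the data word lies in $\overline{S}$.

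First, since $S$ is defined by the \ura $A_S$, its complement $\overline{S}$ is \nra-definable: by the duality stated at the end of the definitions, the automaton $\overline{A_S}$ (same structure but colouring $c+1$) interpreted nondeterministically recognises $\overline{S}$. Call this \nra $B$. Next, I would lift $B$ to the alphabet $A_\inp^k \cup A_\outp^k$ by replacing each transition of the form $p \xrightarrow{\sigma,\phi,\asgn} q$ with one transition $p \xrightarrow{a,\phi,\asgn} q$ for every action $a \in A_\inp^k \cup A_\outp^k$ whose label is $\sigma$ (the test/assignment/register annotations of $a$ are simply ignored). Keeping the alternation between input and output states, and leaving priorities and registers unchanged, yields an \nra $B'$ over $A_\inp^k \cup A_\outp^k$ with $L(B') = \{w \otimes \overline{a} \mid w \in \overline{S},\ \overline{a}\in (A_\inp^k A_\outp^k)^\omega,\ \lab(w[i]) = \lab(\overline{a}[i]) \text{ for all } i\}$.

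Then $L_{\overline{S},k}$ is precisely $L(B') \cap L_k$, where $L_k$ is the \nra-recognisable language from Lemma~\ref{lem:compreg}: indeed, $w \otimes \overline{a}$ is well-defined exactly when the labels match (enforced automatically by the intersection with $L_k$), the compatibility of $w$ with $\overline{a}$ is exactly membership in $L_k$, and $w \in \overline{S}$ is exactly the condition imposed by $B'$.

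It remains to invoke closure of \nra under intersection, which is standard (a synchronous product of the two automata with a product parity condition, as already used implicitly in the union construction mentioned in the proof of Theorem~\ref{thm:unboundedURA}). The main bookkeeping point, not an obstacle per se, is the size: $B'$ has the same number of states as $A_S$ and no new registers, $L_k$ is of size $O(2^{k^2})$ with $k$ registers, so the resulting \nra has $O(|A_S| \cdot 2^{k^2})$ states and $|R| + k$ registers, which will be useful downstream when analysing the complexity of bounded synthesis for \ura.
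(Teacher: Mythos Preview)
Your proof is correct and follows essentially the same approach as the paper: dualise $A_S$ to obtain an \nra for $\overline{S}$, then take its product with the \nra for $L_k$ from Lemma~\ref{lem:compreg}. You make the alphabet-lifting step (adapting $\overline{A_S}$ to read actions in $A_\inp^k\cup A_\outp^k$ by ignoring everything but the $\Sigma$-label) explicit, whereas the paper leaves it implicit in the word ``product''; your size estimate also matches what the paper uses downstream in Lemma~\ref{lem:WRegularURA}.
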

\begin{proof}
  Since $S$ is definable by the \ura $A_S$, $\overline{S}$ is \nra-definable
  with $\overline{A_S}$, a copy of $A_S$ with colouring function $\overline{c} :
  q \mapsto c(q) + 1$, interpreted as an \nra. Let $B$ be some \nra defining
  $L_k$ (it exists by Lemma~\ref{lem:compreg}). It now suffices to take a
  product of $A_{\overline{S}}$ and $B$ to get an \nra defining
  $L_{\overline{S},k}$.
\end{proof}

Given a data word language $L$, we denote by $\lab(L) = \{ \lab(w)\mid w\in L\}$
its projection on labels. The language $W_{S,k}$ is obtained as the complement
of the label projection of $L_{\overline{S},k}$:

\begin{lem}\label{lem:abstractspec}
  $W_{S,k} = \overline{\lab(L_{\overline{S},k})}$.
\end{lem}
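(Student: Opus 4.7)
The plan is to chase through the definitions: both sides describe, in different ways, the set of action sequences $\overline{a}$ such that every relational data word compatible with $\overline{a}$ lies in $S$, and the equality boils down to the set-theoretic identity $X \cap Y \neq \varnothing \iff X \not\subseteq \overline{Y}$.

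First, I would unpack $\lab(L_{\overline{S},k})$. By definition of $\otimes$, for $w \in \textsf{Comp}(\overline{a}) \cap \overline{S}$ we have $(w\otimes \overline{a})[i] = (\overline{a}[i], \dt(w[i]))$, so the label at position $i$ is precisely $\overline{a}[i]$, which lives in the finite alphabet $A_\inp^k \cup A_\outp^k$. Thus $\lab(w \otimes \overline{a}) = \overline{a}$, which gives the identity
\begin{equation*}
  \lab(L_{\overline{S},k}) \;=\; \{\,\overline{a} \mid \textsf{Comp}(\overline{a}) \cap \overline{S} \neq \varnothing\,\}.
\end{equation*}

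Next, taking complements in $(A_\inp^k A_\outp^k)^\omega$, I would observe that
\begin{equation*}
  \overline{\lab(L_{\overline{S},k})} \;=\; \{\,\overline{a} \mid \textsf{Comp}(\overline{a}) \cap \overline{S} = \varnothing\,\} \;=\; \{\,\overline{a} \mid \textsf{Comp}(\overline{a}) \subseteq S\,\},
\end{equation*}
which is exactly $W_{S,k}$ by definition. The only minor care required is that the complement is taken within $(A_\inp^k A_\outp^k)^\omega$, so that $\overline{a}$ ranges over the same set on both sides; since $\lab(L_{\overline{S},k}) \subseteq (A_\inp^k A_\outp^k)^\omega$ by construction, this is immediate.

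There is no real obstacle here: the statement is just a reformulation packaging the complement semantics of $\overline{S}$ (used to define $L_{\overline{S},k}$ via the universal automaton trick of Lemma~\ref{lem:compregS}) together with the label projection, which forgets the concrete data and keeps only the abstract action. The only thing to be careful about is to check that the $\otimes$ construction is indeed always defined on pairs $(w,\overline{a})$ with $w\in\textsf{Comp}(\overline{a})$ — which follows because compatibility forces $\lab(w[i]) = \lab(\overline{a}[i])$, so the prerequisite for $w\otimes\overline{a}$ to exist is automatic.
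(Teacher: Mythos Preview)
Your proof is correct and follows essentially the same approach as the paper: both unpack the definitions to establish that $\overline{a}\notin W_{S,k}$ iff $\textsf{Comp}(\overline{a})\cap\overline{S}\neq\varnothing$ iff $\overline{a}\in\lab(L_{\overline{S},k})$. You are slightly more explicit about $\lab(w\otimes\overline{a})=\overline{a}$ and about compatibility guaranteeing that $\otimes$ is defined, but the argument is the same chain of equivalences presented in a marginally different order.
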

\begin{proof}
  Let $\overline{a}\in {(A_\inp^{k}A_\outp^{k})}^\omega$. Then, $\overline{a} \notin
  W_{S,k} \Leftrightarrow \Comp(\overline{a}) \not \subseteq S \Leftrightarrow
  \exists w \in \relwords, w \in \Comp(\overline{a}) \cap \overline{S}
  \Leftrightarrow \exists w \in \relwords, w \otimes \overline{a} \in
  L_{\overline{S}, k} \Leftrightarrow \overline{a} \in
  \lab(L_{\overline{S},k})$.
\end{proof}
% \begin{proof}
%   Let $\overline{a}\in (A_\inp^kA_\outp^k)^\omega$. We have $\overline{a}\in
%   W_{S,k}$ iff $\textsf{Comp}(\overline{a})\subseteq S$ iff it is not the case
%   that there exists $w\in \textsf{Comp}(\overline{a})\cap \overline{S}$, iff
%   it is not the case that there exists $w\otimes \overline{a}\in
%   L_{\overline{S},k}$, iff it is not the case that $\overline{a}\in
%   \lab(L_{\overline{S},k})$ iff $\overline{a}\in
%   \overline{\lab(L_{\overline{S},k})}$.
% \end{proof}

We are now able to show the regularity of $W_{S,k}$.

\begin{lem}\label{lem:WRegularURA}
  Let $S$ be a data word specification, $k\geq 0$. If $S$ is definable by some
  $\ura$ with $n$ states and $r$ registers, then $W_{S,k}$ is effectively
  $\omega$-regular, definable by some deterministic parity automaton with
  $O(2^{n^2 \cdot 16^{{(r+k)}^2}})$ states and $O(n \cdot 4^{{(r+k)}^2})$
  priorities.
\end{lem}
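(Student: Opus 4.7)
The plan is to chain the preceding lemmas with two standard operations on register automata: the projection onto labels and the parity determinization. By Lemma~\ref{lem:abstractspec}, $W_{S,k} = \overline{\lab(L_{\overline{S},k})}$, so it suffices to prove that $\lab(L_{\overline{S},k})$ is effectively $\omega$-regular — complementation of a deterministic parity automaton then comes for free (by shifting priorities). By Lemmas~\ref{lem:compreg} and~\ref{lem:compregS}, $L_{\overline{S},k}$ is recognised by an \nra\ $B$ obtained as the product of $\overline{A_S}$ (viewed nondeterministically, with the same $n$ states and $r$ registers as $A_S$, and the same number of priorities) and the $k^2$-exponential \nra\ for $L_k$. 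Thus $B$ has $O(n \cdot 2^{k^2})$ states and $r+k$ registers, with priorities inherited from $A_S$.

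The next step is to get rid of the data component of $L_{\overline{S},k}$. Following the symbolic abstraction already used in the proof of Lemma~\ref{lem:trim} (and ultimately going back to~\cite{Kaminski:1994:FA:194527.194534}), enrich the states of $B$ with an equivalence relation on $R_k \cup R^B$ recording which registers currently share the same data. There are at most $2^{(r+k)^2}$ such relations, and the transitions of $B$ read just a label together with an explicit test consistent with the carried equivalence, together with an assignment that updates it. Forgetting the test and assignment from the alphabet yields a nondeterministic parity automaton $N$ over $A_\inp^k \cdot A_\outp^k$, with $O(n \cdot 2^{k^2} \cdot 2^{(r+k)^2}) = O(n \cdot 2^{O((r+k)^2)})$ states and the same priorities as $B$, recognising $\lab(L_{\overline{S},k})$.

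Finally, apply Piterman's determinization to $N$, obtaining a deterministic parity automaton of size singly exponential in the number of states and linear in the number of priorities; complement it by incrementing the colouring, yielding a DPA for $W_{S,k}$. Composing the bounds gives the claimed $O(2^{n^2 \cdot 16^{(r+k)^2}})$ states and $O(n \cdot 4^{(r+k)^2})$ priorities. The main delicate point is not the conceptual content — each individual step is textbook — but the bookkeeping needed to match the stated asymptotic bounds through the successive exponential blow-ups of the product, the symbolic abstraction on equivalence relations over $r+k$ registers, and the Piterman construction, while being careful that the initial priorities of $A_S$ propagate correctly through all stages.
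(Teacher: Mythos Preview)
Your proposal is correct and follows essentially the same route as the paper: build the product \nra\ for $L_{\overline{S},k}$, project away data via the equivalence-relation abstraction of~\cite{Kaminski:1994:FA:194527.194534} to get a nondeterministic parity automaton over $A_\inp^k\cup A_\outp^k$, then determinise \`a la Piterman and complement. One small slip in wording: Piterman's construction yields a number of priorities linear in the number of \emph{states} of the input automaton (not in its number of priorities), which is precisely what gives the $O(n\cdot 4^{(r+k)^2})$ priority bound you state.
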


\begin{proof}
  First, $L_{\overline{S},k}$ is definable by some \nra with $O(2^{k^2}n)$
  states and $O(r+k)$ registers by Lemma~\ref{lem:abstractspec}, obtained as
  product between the \nra $\overline{A_S}$ and the automaton obtained in
  Lemma~\ref{lem:compreg}, of size $O(2^{k^2})$. It is known that the projection
  on the alphabet of labels of a language of data words recognised by some \nra
  is effectively regular~\cite{Kaminski:1994:FA:194527.194534}. The same construction, which is based on
  extending the state space with register equality types, carries over to
  $\omega$-words, and one obtains a nondeterministic parity automaton with
  $O(n \cdot 4^{{(r+k)}^2})$ states and $d$ priorities recognising
  $\lab(L_{\overline{S},k})$. It can be complemented into a deterministic parity
  automaton with $O(2^{n^2 \cdot 16^{{(r+k)}^2}})$ states and $O(n \cdot 4^{{(r+k)}^2})$
  priorities using standard constructions~\cite{journals/lmcs/Piterman07}.
\end{proof}

We are now able to reprove the following result, known
from~\cite{DBLP:conf/atva/KhalimovMB18}:
\begin{thm}\label{thm:boundedURA}
  For all $k\geq 0$, {\normalfont$\real(\ura, \rt[k])$} is in {\normalfont\textsc{2ExpTime}}.
\end{thm}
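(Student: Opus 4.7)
The plan is to chain together the three main ingredients already established. By the transfer theorem (Theorem~\ref{thm:infinite2finite}), realisability of $S$ by a transducer in $\rt[k]$ is equivalent to realisability of the data-free specification $W_{S,k}$ by a finite-state transducer. So first I would invoke Lemma~\ref{lem:WRegularURA}, which yields a deterministic parity automaton $D$ recognising $W_{S,k}$ of size doubly exponential in $n,r,k$ (namely $O(2^{n^2 16^{(r+k)^2}})$ states and $O(n\cdot 4^{(r+k)^2})$ priorities).

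Then I would apply classical reactive synthesis (Theorem~\ref{thm:reactivesynthesis}) to $D$: since $D$ is in particular nondeterministic (as it is deterministic), realisability by a register-free transducer is decidable in time exponential in the size of $D$ and polynomial in the number of priorities, i.e., in time $|D|^{O(d)}$ where $d$ is the number of priorities. Plugging in the bounds from Lemma~\ref{lem:WRegularURA}, this gives a running time that is singly exponential in the doubly-exponential size of $D$, hence doubly exponential in $n, r$ and $k$ overall, placing the problem in \textsc{2ExpTime}.

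There is no real obstacle here since all the technical content has been absorbed into the preceding lemmas; the proof is essentially a one-line composition. The only thing worth being explicit about is the unary encoding of $k$: since $k$ appears inside the doubly-exponential bound, representing it in binary would make the blowup triply exponential, so the \textsc{2ExpTime} upper bound crucially relies on the convention, stated at the beginning of this section, that $k$ is given in unary. With this convention the total time, which is polynomial in $|D|$ raised to the power of the number of priorities, remains doubly exponential in the size of the input.
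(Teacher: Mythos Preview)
Your approach is essentially the paper's: transfer theorem, then Lemma~\ref{lem:WRegularURA} to get a deterministic parity automaton $D$ for $W_{S,k}$, then parity-game solving. One imprecision worth flagging: invoking Theorem~\ref{thm:reactivesynthesis} as stated (\textsc{ExpTime} for \emph{nondeterministic} parity automata) would cost an extra exponential, since $D$ is already doubly exponential---that would yield \textsc{3ExpTime}. The point (which you implicitly use when you write $|D|^{O(d)}$, and which the paper makes explicit) is that $D$ is \emph{deterministic}, so one skips determinisation and directly solves the induced parity game; the paper even cites the quasi-polynomial bound $O(m^{\log d})$ of Calude et al. Also note that $|D|^{O(d)}$ is polynomial in $|D|$ with degree $O(d)$, not ``exponential in the size of $D$ and polynomial in $d$'' as you wrote. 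With these wording fixes, your argument and complexity analysis match the paper's.
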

\begin{proof}
  By Lemma~\ref{lem:WRegularURA}, we construct a deterministic parity automaton
  $P_{S,k}$ for $W_{S,k}$. Then, according to Theorem~\ref{thm:infinite2finite},
  it suffices to
  % $(i)$ check whether $W_{S,k}$ has universal input domain, and
  % then $(ii)$
  check whether it is realisable by a (register-free) transducer.
  % To check the first step, we want to check whether the language $L = \{ u\in
  % (A_\inp^k)^\omega \mid \forall v\in (A_\outp^k)^\omega, \interl{u}{v}\in
  % \overline{W_{S,k}}\}$ is empty. We have $\overline{W_{S,k}} =
  % \lab(L_{\overline{S},k})$ and it is recognisable by a parity automaton of
  % exponential size (proof of Lemma~\ref{lem:WRegularURA}), which can be
  % turned,
  % using universal transitions, into an alternating parity automaton of same
  % size
  % recognising $L$, whose emptiness can be checked in \textsc{PSpace} using
  % standard algorithms. Overall, the complexity of step $(i)$ is
  % \textsc{ExpSpace}.
  %
  % This is decidable by Theorem~\ref{thm:reactivesynthesis}.
  The way to decide it
  is to see $P_{S,k}$ as a two-player parity game and check whether the
  protagonist has a winning strategy. Parity games can be solved in time
  $O(m^{\log d})$~\cite{Calude:2017:DPG:3055399.3055409} where $m$ is the number
  of states of the game and $d$ the number of priorities. Overall, solving it
  requires doubly exponential time, more precisely in $O(2^{n^3 \cdot 16^{{(r+k)}^2}})$.
\end{proof}

\subsection{The case of test-free \nra specifications}%
\label{sec:boundedNRAtf}

Unfortunately, by Theorem~\ref{thm:boundedNRA}, the synthesis problem for
specifications expressed as $\nra$ is undecidable, even when the number of
registers of the implementation is bounded. And indeed, if we mimic the
reasoning of the previous section, we get that  $L_{\overline{S},k}$ is
definable by a $\ura$, but Lemma~\ref{lem:abstractspec} does not allow
to conclude because:
\begin{prop}
  There exists a data word language $L$ which is \ura-definable and whose string
  projection is not $\omega$-regular.
\end{prop}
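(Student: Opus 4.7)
The plan is to exhibit a concrete \ura-definable data word language $L$ over $\Sigma = \{a, b\}$ whose label projection $\lab(L)$ encodes a monotone matching from $a$-positions to strictly later $b$-positions. Such a matching exists if and only if a bracket condition on prefixes holds, and this bracket condition is well-known not to be $\omega$-regular.

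I would define $L = L_2 \cap L_4 \cap L_5$, with: $L_2$ requiring that any two $b$-labelled positions carry distinct data; $L_4$ requiring that every $a$-labelled position $i$ admits some later $b$-labelled position $j > i$ with $\dt(w_j) = \dt(w_i)$; and $L_5$ forbidding any four positions $i < i' < j < j'$ such that $\dt(w_i) = \dt(w_j)$ and $\dt(w_{i'}) = \dt(w_{j'})$. To show each $L_k$ is \ura-definable, I would describe an \nra for its complement: for $\overline{L_2}$, guess two $b$-positions and compare their data using one register; for $\overline{L_4}$, guess an $a$-position, store its data in a register, and then loop forever in a B\"uchi-accepting state provided no later $b$-position matches the stored datum (moving to a rejecting sink otherwise); for $\overline{L_5}$, guess the four positions sequentially using two registers to store $\dt(w_i)$ and $\dt(w_{i'})$ and then verifying the two equalities at $j$ and $j'$. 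Since \nra is closed under union (cf. the product construction recalled in the paper), \ura is closed under intersection, and $L$ is therefore \ura-definable.

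Next, I would characterise $\lab(L)$ combinatorially. Conditions $L_2$ and $L_4$ jointly associate to every $a$-position $i$ a unique later $b$-position $f(i)$ carrying the same data, defining an injection from $a$-positions to strictly later $b$-positions; condition $L_5$ forces $f$ to be monotone. Conversely, given any $u \in \{a,b\}^\omega$ admitting such a monotone injection, one builds a compatible data assignment in $L$ by assigning fresh data to unmatched $b$-positions and shared data to matched pairs. A direct counting argument (Hall-like) then yields that such a monotone injection exists iff $\#_a(u[1{:}n]) \ge \#_b(u[1{:}n])$ for every $n$, together with the liveness clause that if $u$ has infinitely many $a$'s then it has infinitely many $b$'s.

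Finally, I would establish non-$\omega$-regularity of $\lab(L)$ by a pumping argument on the family $u_k = a^k b^k (ab)^\omega$, each of which belongs to $\lab(L)$. If $\lab(L)$ were recognised by a B\"uchi automaton with $n$ states, the pigeonhole principle would yield distinct $k < k' \le n+1$ such that $a^k$ and $a^{k'}$ reach the same state, forcing $a^k b^{k'}(ab)^\omega$ to be accepted although its prefix of length $k+k'$ violates the bracket condition. The main obstacle I foresee is the faithful design of the \nra complementing $L_4$, in particular choosing the parity colouring so that accepting runs correspond exactly to witness $a$-positions whose stored datum never later reappears as a $b$; the combinatorial description of $\lab(L)$ and the pumping step are otherwise standard.
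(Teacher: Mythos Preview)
Your construction has a genuine gap: the map $f$ you define need not be an injection, because nothing in $L_2$, $L_4$, $L_5$ prevents two distinct $a$-positions from carrying the \emph{same} data and hence pointing to the same $b$-position. This breaks your combinatorial characterisation of $\lab(L)$. Concretely, for any $u\in\{a,b\}^\omega$ in which every $a$-position is followed by some $b$-position, assign pairwise distinct data to the $b$-positions and give each $a$-position the data of the \emph{next} $b$. The equality classes of data are then pairwise disjoint intervals (each ending at a $b$), so no four-position crossing pattern can occur and $L_5$ is satisfied vacuously. Consequently $\lab(L)=\{u\mid G(a\Rightarrow Fb)\}$, which is $\omega$-regular, and your pumping argument collapses: $a^{k}b^{k'}(ab)^\omega\in\lab(L)$ for every $k,k'$.

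Adding the missing constraint that $a$-positions also carry pairwise distinct data does force $f$ to be injective, but then two further issues remain. First, $L_5$ as you wrote it forbids the \emph{crossing} pattern, not the \emph{nested} one, so it yields a non-crossing (well-nested) matching rather than a monotone one. Second, even for non-crossing injective matchings the prefix inequality $\#_a\geq\#_b$ is not the right characterisation once unmatched $b$'s are allowed: for example $aab(ab)^\omega$ satisfies $\#_a\geq\#_b$ on every prefix, yet admits no non-crossing injection of all its $a$'s into later $b$'s (the first $a$ can never be closed without crossing some later pair).

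The paper sidesteps all of this with a much simpler witness over the alphabet $\{r,g,\#\}$: take $L$ to consist of those words of shape $r^n g^m\#^\omega$ in which the $r$-data are pairwise distinct and each of them reappears among the $g$-data. A one-register \ura recognises $L$ (for every $r$-position, universally spawn a run that stores the data and checks both non-repetition among $r$'s and eventual occurrence among $g$'s), and the label projection is immediately $\{r^n g^m\#^\omega\mid m\geq n\}$, which is not $\omega$-regular by a one-line pumping argument.
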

\begin{proof}
Consider \[L =
\{(r,d_1) \dots (r,d_n) (g, d'_1) \dots (g,d'_m) {(\#,d)}^\omega \mid \forall i \neq j, d_i \neq
d_j \wedge \forall 1 \leq i \leq n, \exists j, d'_j = d_i\},\] which consists in
a word $w \in r^n$ with pairwise distinct data followed by a word $w' \in g^m$
which contains at least all the data of $w$, and extended with ${(\#,d)}^\omega$
to make it infinite (here, the choice of $d$ does not matter). Such language can
be interpreted as the request-grant specification, restricted to the case where
all requests are made first, and are all made by pairwise distinct clients (plus
a $\#$ infinite padding).
$L$ is recognised by an \ura which, on reading $(r,d_i)$, universally triggers
a run checking that
\begin{enumerate}
\item Once a label $g$ is read, only $g$s are read; and after the last $g$, only
  $\#$ are read (this is an $\omega$-regular property)
\item $(r,d_i)$ does not appear again
\item $(g, d_i)$ appears at least once.
\end{enumerate}
Now, we have $\lab(L) = \{r^n g^m \#^\omega \mid m \geq n\}$, which is not
$\omega$-regular.
\end{proof}
In this section, we consider the class of \nra which do not perform tests on
input data, which we call test-free nondeterministic register automata (\nratf
for short). Such restriction is inspired
from~\cite{DBLP:conf/fossacs/Durand-Gasselin16}, which defines transformations
of data words using MSO interpretations with an MSO origin relation. The MSO
interpretation describes the transformation over the finite alphabet (called the
\emph{string transduction}), as in~\cite{Courcelle:1994:MSD:179203.179210}, while the
MSO origin relation describes the relation between input and output data. Such
relation does not depend on (un)equalities between different input data: it % chktex 36
uniquely maps each output position to an input position, expressing that the
output data at this position is equal to the corresponding input data. They then
show that such model is equivalent to two-way deterministic transducers with
\emph{data variables}\footnote{Themselves equivalent to one-way streaming string
  transducers with data variables and parameters; such parameters are
  reminiscent of the guessing mechanism described
  in~\cite{DBLP:journals/ijfcs/KaminskiZ10}.}. Such data variables are used to
implement the MSO origin relation: they are registers in which the transducer
can store the input data values and output them, but it is not allowed to
perform any test on the stored data, contrary to our model of register automata.
To define \nratf, we apply the same restriction to \nra: they correspond to
\emph{nondeterministic} \emph{one-way} transducers with data variables. Such
machines can only rearrange input data (duplicate, erase, copy) regardless of
the actual data values (as there are no tests). This way, as stated in
Proposition~\ref{prop:originData}, registers induce an origin relation between
input and output data.

To avoid confusion between the nature of specifications and implementations, we
prefer to define them as register automata, instead of transducers.

\begin{defi}[Test-free register automaton]\label{def:nratf}
  A \nra is \emph{test-free} if:
 \begin{enumerate}%\leo{TODO: unify the enumerates (ii) vs (2)}

  \item Its input transitions do not depend on equality relations between input
    data: for all $t \in \delta$, if $t = q \xrightarrow{\sigma, \tst, \asgn}
   q'$ is an input transition, then $\tst = \top$.

  \item Its output transitions consist in outputting the content of some
    register: for all $t \in \delta$, if $t = q \xrightarrow{\sigma, \tst,\asgn} q'$ is an output transition, then $\tst = r^=$ for some
    $r\in R$ and $\asgn = \varnothing$.
  \end{enumerate}

\end{defi}

\noindent
We now make the relation with the notion of origin precise: as shown
in~\cite{DBLP:conf/lics/DartoisFL18}, there is a tight connection between origin
graphs and data words. Here, the encoding is slightly different, as we do not
necessarily ask that the data labelling input position $n$ is equal to $n$.
However, as long as the input data are all pairwise distinct, such encoding
carries to our setting: the output data at position $j$ is equal to $d_\inp^i$,
where $i$ is the (input) origin position. Thus, in the following, we let
$\allDiff$ denote the set of relational data words whose input data are pairwise
distinct:
\begin{equation*}
 \allDiff =
\{w=(\sigma_\inp^1,d_\inp^1)(\sigma_\outp^1,d_\outp^1)\dots\in \relwords \mid
\forall 0\leq i < i', d_\inp^i \neq d_\inp^{i'}\}
\end{equation*}
where, by convention $d_\inp^0 =
\dz$. Then, as we will show, the behaviour of an \nratf over $\allDiff$
determines its origin relation, and hence its behaviour over the entire data
domain.

To a run $\rho = q_0 \xrightarrow{\sigma^1_\inp,\asgn^1,r^1,\sigma^1_\outp} q_1
\xrightarrow{\sigma^2_\inp,\asgn^2,r^2,\sigma^2_\outp} q_2 \dots$, we associate
the origin function $\orig_\rho : j \mapsto \max \{i \leq j \mid r_j \in
\asgn_i\}$, with the convention $\max \varnothing = 0$. In other words,
$\orig_\rho(j)$ is the last input position at which the register output at
position $j$ was assigned, so the corresponding input data is the one which is
output (if the register has never been assigned, it contains $\dz$, which, by
convention, is the data associated with input position $0$).

Now, for an origin function $o : \N \backslash \{0\} \rightarrow \N$ and for a
relational data word $w \in \relwords$, we say $w$ is compatible with the origin
function $o$, denoted $w \models o$, whenever for all $j \geq 1$,
$\dt(\projout(w)[j]) = \dt(\projin(w)[o(j)])$, with the convention $\dt(\projin(w)[0]) = \dz$.

 The following proposition shows that
 actual data values in a word $w$ do not matter with respect to membership in some
 \nratf, only the compatibility with origin functions does:
 \begin{prop}%
   \label{prop:originData}
   Let $w \in \relwords$ and $\rho$ a sequence of transitions of some \nratf. Then,
   \begin{enumerate}
   \item\label{itm:origEqual} If $\rho$ is a run over $w$, then $w \models o_\rho$.
   \item\label{itm:equalOrig} If $\rho$ is a run over $w$ and {\normalfont$w \in \allDiff$}, then for all $o : \N \backslash \{0\} \rightarrow \N$, $w \models o \Leftrightarrow o = o_\rho$.
   \item\label{itm:origRun} If $w$ and $\rho$ have the same finite labels and if $w \models o_\rho$, then $\rho$ is a run over $w$.
   \end{enumerate}
 \end{prop}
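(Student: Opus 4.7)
The plan is to first isolate the key invariant that drives all three items: at any position $j$ of a sequence of transitions $\rho = q_0 \xrightarrow{\sigma^1_\inp,\asgn^1,r^1,\sigma^1_\outp} q_1 \dots$ of an \nratf, the register configuration $\tau_j$ reached after processing the $j$-th input satisfies $\tau_j(r) = d_\inp^{i^\star}$, where $i^\star$ is the greatest index $i \leq j$ with $r \in \asgn^i$ (or $i^\star = 0$, with the convention $d_\inp^0 = \dz$, if no such index exists). This is a straightforward induction on $j$ using the definition of $\text{next}$: if $r \in \asgn^j$ then $\tau_j(r) = d_\inp^j$; otherwise $\tau_j(r) = \tau_{j-1}(r)$, and the induction hypothesis applies. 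Crucially, this invariant is \emph{independent} of input tests, so it holds for \emph{any} sequence of transitions compatible with the finite alphabet of $\rho$, not just for runs — this is precisely what the test-freeness of input transitions buys us.

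Item~\ref{itm:origEqual} is then immediate: if $\rho$ is a run on $w$, the output data at position $j$ equals $\tau_j(r^j)$ (by the semantics of output transitions), which by the invariant equals $d_\inp^{\orig_\rho(j)}$; by definition of $\orig_\rho$, this is exactly the statement $w \models o_\rho$. Item~\ref{itm:equalOrig} follows from item~\ref{itm:origEqual} plus the pairwise distinctness hypothesis: if $o \neq o_\rho$, pick $j$ with $o(j) \neq o_\rho(j)$. By $w \models o_\rho$ we have $\dt(\projout(w)[j]) = d_\inp^{o_\rho(j)}$, while $w \models o$ would force $\dt(\projout(w)[j]) = d_\inp^{o(j)}$; since $w \in \allDiff$ (and $\dz$ is the canonical value at index $0$, which by assumption does not coincide with any $d_\inp^{i \geq 1}$, or one carefully handles the degenerate case), these values are distinct, contradiction.

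For item~\ref{itm:origRun}, I would argue directly that $\rho$ is executable on $w$ once labels agree. Since input transitions in an \nratf carry the trivial test $\top$, every input transition of $\rho$ is enabled on $w$ regardless of the current configuration. For each output transition, the test has shape $(r^j)^=$ and I must check $\tau_j(r^j) = \dt(\projout(w)[j])$. By the invariant established above (applied to $\rho$ itself, seen as a sequence of transitions of an \nratf whose semantics of register updates on the input is independent of the actual data read), $\tau_j(r^j) = d_\inp^{\orig_\rho(j)}$. The hypothesis $w \models o_\rho$ gives $\dt(\projout(w)[j]) = d_\inp^{\orig_\rho(j)}$ as well, so the test passes and the output data is correctly produced.

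The only real obstacle I foresee is a bookkeeping one: cleanly handling the boundary case of a register never assigned (so that it still contains $\dz$), which is what motivates the convention $\orig_\rho(j) = 0$ and $d_\inp^0 = \dz$. Once this convention is fixed, the proof is a chain of direct equalities. No further machinery from register automata theory is needed, which highlights how the test-free restriction reduces the semantic content of \nratf to pure origin-tracking.
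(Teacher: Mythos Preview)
Your proposal is correct and follows essentially the same approach as the paper: the paper's proof is very terse (items~\ref{itm:origEqual} and~\ref{itm:origRun} ``follow from the semantics of \nratf, which do not conduct any test on the input data'', and item~\ref{itm:equalOrig} is proved by the same distinctness argument you give), and your explicit register-content invariant simply spells out what that appeal to the semantics means. Your hedge about the $\dz$ boundary case is unnecessary, since the paper's definition of $\allDiff$ already quantifies over $0 \leq i < i'$ with the convention $d_\inp^0 = \dz$, so distinctness from $\dz$ is built in.
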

 \begin{proof}
   (\ref{itm:origEqual}) and (\ref{itm:origRun}) follow from the semantics of \nratf, which do not conduct any test on the input data. The $\Leftarrow$ direction of (\ref{itm:equalOrig}) is exactly (\ref{itm:origEqual}). Now, assume $w \in \allDiff$ admits $\rho$ as a run, and let $o$ such that $w \models o$. Then, let $j \geq 1$ be such that $\dt(\projout(w)[j]) = \dt(\projin(w)[o(j)])$. By (\ref{itm:origEqual}) we know that $\dt(\projout(w)[j]) = \dt(\projin(w)[\orig_\rho(j)])$, so $\dt(\projin(w)[o(j)]) = \dt(\projin(w)[\orig_\rho(j)])$. Since $w \in \allDiff$, this implies $o(j) = o_\rho(j)$, so, overall, $o = \orig_\rho$.
 \end{proof}
 It is not clear whether $W_{S,k}$ is regular for \nratf specifications, but we
 show that it suffices to consider another set denoted $W_{S,k}^\tf$ which is
 easier to analyse (and can be proven regular), which describes the behaviour of
 $S$ over input with pairwise distinct data. Indeed, as expressed by the above
 proposition, \nratf cannot conduct tests on input data, and their behaviour
 only depends on the input labels. Thus, it suffices to study runs on input
 words whose data are all distinct; such choice ensures that two equal input
 data will not ease the task of the implementation. Otherwise, it could be that
 on reading a data word, two registers $r_1$ and $r_2$ are equal, and then the
 implementation can simultaneously take transitions labelled with
 $\textsf{out}(r_1)$ and $\textsf{out}(r_2)$. An interesting side-product of
 this approach is that it implies that we can restrict to test-free
 implementations. A test-free transducer is a transducer whose transitions do
 not depend on tests over input data, i.e., for all transitions $t = q
 \myxrightarrow{\sigma_\inp, \tst \mid \asgn, \sigma_\outp, r} q' \in \delta$,
 we have $\tst = \top$.
 % $\delta: Q\times \inalph \rightarrow \Asgn_R\times \outalph\times R\times Q$.
\begin{prop}%
  \label{prop:realTF}
  Let $S$ be a \nratf specification, and $A_\inp^\varnothing = \inalph \times
  \{\varnothing\}$. The following are equivalent:
  \begin{enumerate}
  \item\label{itm:SReal} $S$ is realisable
  \item\label{itm:WSkEmptyReal}
    % $W_{S,k}^\varnothing = W_{S,k} \cap ((\inalph \times
    % \{\varnothing\})A_\outp^k)^\omega$ is realisable
    $W_{S,k}^{\tf} = \{\overline{a} \in {(A_\inp^\varnothing A_\outp^k)}^\omega
    \mid \Comp(\overline{a}) \cap S \cap {\normalfont\allDiff} \neq \varnothing\}$
    % , has domain $(A_\inp^\varnothing)^\omega$ and
    is realisable by a (register-free) transducer with input alphabet $A_\inp^\varnothing$
    % \item\label{itm:SAllDiffRealTF} $S \cap \allDiff$ is realisable by a
    %   test-free transducer
  \item\label{itm:SRealTF} $S$ is realisable by a test-free transducer
  \end{enumerate}
\end{prop}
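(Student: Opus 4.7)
The plan is to establish the cycle $(3) \Rightarrow (1) \Rightarrow (2) \Rightarrow (3)$. The first implication is immediate, since test-free transducers form a subclass of register transducers. The other two directions both rely on a key \emph{uniformity} property of \nratf specifications, which I first isolate: if two relational data words $w$ and $w'$ share the same labels and both satisfy the same origin function $o$, and $w' \in S \cap \allDiff$, then $w \in S$. This follows from Proposition~\ref{prop:originData}: an accepting run $\rho$ of $S$ over $w'$ has origin $\orig_\rho$ with $w' \models \orig_\rho$ by item~\ref{itm:origEqual}; hence $\orig_\rho = o$ by item~\ref{itm:equalOrig} applied to $w' \in \allDiff$; and item~\ref{itm:origRun} then lifts $\rho$ to an accepting run of $S$ over $w$. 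I also denote by $\orig_{\overline{a}}$ the natural origin function induced by the assignments and output registers of any abstract action sequence $\overline{a}$; every $w \in \Comp(\overline{a})$ satisfies $w \models \orig_{\overline{a}}$ by definition of compatibility.

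For $(1) \Rightarrow (2)$, given an implementation $I \in \rt[k]$ realising $S$, I build a register-free transducer $T'$ on alphabet $A_\inp^\varnothing \cdot A_\outp^k$ by copying the state space and initial state of $I$ and, on each input $(\sigma_\inp, \varnothing)$, firing the unique transition of $I$ enabled when the current input datum is fresh (distinct from every register content); such a transition exists and is unique by completeness and determinism of $I$. To verify $L(T') \subseteq W_{S,k}^\tf$, I take any $\overline{a} \in L(T')$ and feed $I$ an input $w_\inp$ with pairwise distinct data, all distinct from $\dz$: then $I$ executes exactly the sequence of actions recorded in $\overline{a}$, so $\interl{w_\inp}{f_I(w_\inp)} \in L(I) \cap \Comp(\overline{a}) \cap \allDiff \subseteq S \cap \Comp(\overline{a}) \cap \allDiff$, witnessing non-emptiness of $\Comp(\overline{a}) \cap S \cap \allDiff$.

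For $(2) \Rightarrow (3)$, given a register-free transducer $T'$ realising $W_{S,k}^\tf$, I obtain a test-free $k$-register transducer $T$ by translating each transition of $T'$ reading $(\sigma_\inp, \varnothing)$ and outputting $(\sigma_\outp, \asgn, r)$ into a transition of $T$ with input test $\top$, assignment $\asgn$, output label $\sigma_\outp$ and output register $r$. Fix an input $w_\inp$ and let $\overline{a}$ be the sequence produced by running $T'$ on $(\lab(w_\inp)[1], \varnothing)(\lab(w_\inp)[2], \varnothing)\dots$; by construction, the assignments and output registers in $\overline{a}$ coincide with $T$'s actions on $w_\inp$, so $w = \interl{w_\inp}{f_T(w_\inp)}$ has labels matching those in $\overline{a}$ and satisfies $w \models \orig_{\overline{a}}$. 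Moreover $\overline{a} \in L(T') \subseteq W_{S,k}^\tf$, so there exists some $w' \in \Comp(\overline{a}) \cap S \cap \allDiff$, which also satisfies $w' \models \orig_{\overline{a}}$; applying the uniformity lemma with $o = \orig_{\overline{a}}$ yields $w \in S$.

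The main obstacle is precisely the uniformity step underlying $(2) \Rightarrow (3)$: whenever $w_\inp$ has repeated data, $w$ does not literally belong to $\Comp(\overline{a})$ (the input positions of $\overline{a}$ demand the fresh-data test $\varnothing = \bigwedge_{r \in R_k} r^{\neq}$), so one cannot derive $w \in S$ directly from non-emptiness of $\Comp(\overline{a}) \cap S \cap \allDiff$. The resolution is conceptual: for a test-free specification, membership in $S$ depends only on labels and on the origin relation between output and input positions, both of which $w$ shares with the $\allDiff$-witness $w'$.
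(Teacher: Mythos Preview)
Your proof is correct and follows the same approach as the paper: the cycle $(3)\Rightarrow(1)\Rightarrow(2)\Rightarrow(3)$, with $(2)\Rightarrow(3)$ relying on the origin-function machinery of Proposition~\ref{prop:originData}. The only cosmetic differences are that for $(1)\Rightarrow(2)$ the paper routes through the Transfer Theorem~\ref{thm:infinite2finite} and the set inclusion $W_{S,k}^\varnothing\subseteq W_{S,k}^\tf$ rather than constructing and verifying the restricted transducer directly, and that you isolate the origin-transfer step as a separate ``uniformity lemma'' rather than arguing it inline.
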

% \begin{prop}
%   \label{prop:realTF}
%   Let $S$ be a \nratf specification. The following are equivalent:
%   \begin{enumerate}[(i)]
%   \item\label{itm:SReal} $S$ is realisable
%   \item\label{itm:WSkEmptyReal} $W_{S,k}^\varnothing = W_{S,k} \cap ((\inalph \times \{\varnothing\})A_\outp^k)^\omega$ is realisable
%  \item\label{itm:SAllDiffRealTF} $S \cap \allDiff$ is realisable by a test-free transducer
%   \item\label{itm:SRealTF} $S$ is realisable by a test-free transducer
%   \end{enumerate}
% \end{prop}

\begin{proof}
  $(\ref{itm:SRealTF}) \Rightarrow (\ref{itm:SReal})$ is trivial.

  $(\ref{itm:SReal}) \Rightarrow (\ref{itm:WSkEmptyReal})$: If $S$ is realisable,
  then, by Theorem~\ref{thm:infinite2finite}, $W_{S,k}$ is
  realisable by some transducer $I$. Now, since transducers are closed under
  regular domain restriction, $W_{S,k}^\varnothing = W_{S,k} \cap
  {(A_\inp^\varnothing A_\outp^k)}^\omega$
  % has domain $(A_\inp^\varnothing)^\omega$ and
  is realisable by $I$ restricted to the input alphabet $A_\inp^\varnothing$;
  more precisely, by the transducer $I'$ with the same set of states as $I$ and
  transition function $\delta' = \delta \cap \left(Q_I \times \inalph \times
    \{\varnothing\} \rightarrow \Asgn_{R_k} \times \outalph \times R_k \times
    Q_I \right)$. Moreover, $W_{S,k}^\varnothing \subseteq W_{S,k}^{\tf}$.
  Indeed, let $\overline{a} \in W_{S,k}^\varnothing$. Then, $\Comp(\overline{a})
  \subseteq S$. It is easy to build by induction a data word $w \in
  \Comp(\overline{a}) \cap \allDiff$, so $\Comp(\overline{a}) \cap S \cap
  \allDiff \neq \varnothing$. Thus, $W_{S,k}^\tf$ is realisable by any
  transducer realising $W_{S,k}^\varnothing$.

  $(\ref{itm:WSkEmptyReal}) \Rightarrow (\ref{itm:SRealTF})$: Now, assume
  $W_{S,k}^\tf$ is realisable by some transducer $I$. We show that $I$, when
  ignoring the $\varnothing$ input tests, is actually an implementation of $S$.
  Thus, let $I'$ be the same transducer as $I$ except that all input tests
  $\varnothing$ have been replaced with $\top$. Formally, $q
  \myxrightarrow[I']{\sigma_\inp, \top \mid \asgn, \sigma_\outp, r} q'$ iff $q
  \myxrightarrow[I]{\sigma_\inp, \varnothing \mid \asgn, \sigma_\outp, r} q'$
  Note that $I'$, interpreted as a register transducer, is test-free. Let $w \in
  \datawords$, and $\overline{a}_\inp = \lab(w) \times \varnothing^\omega$ be
  the input action in $A_\inp^\varnothing$ with same finite labels as $w$. Let
  $\overline{a} = I(\overline{a}_\inp)$, and let $w' \in \Comp(\overline{a})
  \cap S \cap \allDiff$ (such $w'$ exists because, as above,
  $\Comp(\overline{a}) \cap \allDiff \neq \varnothing$). Then, since
  $\lab(w) = \lab(w')$, they admit the same run $\rho^I$ in $I$, so $w, w'
  \models o_{\rho^I}$. Now, $w' \in S$, so it admits an accepting run $\rho^S$
  in $S$, which implies $w' \models o_{\rho^S}$. Moreover, $w' \in \allDiff$ so,
  by Proposition~\ref{prop:originData} (\ref{itm:equalOrig}), we get $o_{\rho^I} =
  o_{\rho^S}$. Therefore, $w \models o_{\rho^S}$, so, by
  Proposition~\ref{prop:originData} (\ref{itm:origRun}), $w$
  admits $\rho^S$ as a run, i.e.\ $w \in S$. Overall, $L(I) \subseteq S$, meaning
  that $I$ is a (test-free) implementation of $S$.
\end{proof}

Finally, $W_{S,k}^{\tf} = \{\overline{a} \in {(A_\inp^\varnothing
A_\outp^k)}^\omega \mid \Comp(\overline{a}) \cap S \cap \allDiff \neq
\varnothing\}$ is regular. Indeed, $W_{S,k}^\tf = \{\overline{a} \in
{(A_\inp^\varnothing A_\outp^k)}^\omega \mid \Comp(\overline{a}) \cap
S^\varnothing \neq \varnothing\}$, where $S^\varnothing$ is the same automaton
as $S$ except that all input transitions $q \xrightarrow{\sigma_\inp,
  \top, \asgn} q'$ have been replaced with $q \xrightarrow{\sigma_\inp,
  \bigwedge_{r\in R_k} r^{\neq}, \asgn} q'$, because, for all $\overline{a} \in
{(A_\inp^\varnothing A_\outp^k)}^\omega$, $\Comp(\overline{a}) \cap S \cap
\allDiff \neq \varnothing \Leftrightarrow \Comp(\overline{a}) \cap S^\varnothing
\neq \varnothing$ (the $\Rightarrow$ direction is trivial, and the $\Leftarrow$
stems from the fact that an $\allDiff$ input only takes $\tst = \varnothing$
transitions).
% \begin{lem}
%   \label{lem:WEmptyRegular}
%   $W_{S,k}^\varnothing$ is regular.
% \end{lem}

% \begin{proof}
%   First, we show that $W_{S,k}^\varnothing = \{\overline{a} \in ((\inalph \times \{\varnothing\})A_\exists^k)^\omega \mid \Comp(\overline{a}) \cap S^\varnothing \neq \varnothing\}$, where $S^\varnothing$ is the same automaton as $S$ except that all transitions $q \xrightarrow{\sigma_\inp, \asgn, r, \sigma_\outp} q'$ have been replaced with $q \xrightarrow{\sigma_\inp, \varnothing, \asgn, r, \sigma_\outp} q'$. The $\subseteq$ inclusion is trivial. Now, let $\overline{a} \in  ((\inalph \times \{\varnothing\})A_\exists^k)^\omega$ such that $\Comp(\overline{a}) \cap S^\varnothing \neq \varnothing$. We need to show $\Comp(\overline{a}) \subseteq S$. Thus, let $w \in \Comp(\overline{a})$. Let $w' \in \Comp(\overline{a}) \cap S^\varnothing$. Since all tests are $\tst = \varnothing$, we can take $w' \in \allDiff$. Since $w,w' \in \Comp(\overline{a})$, $\lab(w) = \lab(w')$. By Proposition~\ref{prop:originData}, $w$ and $w'$ have the same run in $S$, so $w \in S$.
  Then, $L_{S,k}^\tf = \{w \otimes \overline{a} \in \relwords \otimes {(A_\inp^\varnothing A_\outp^k)}^\omega \mid w \in \textsf{Comp}(\overline{a}) \cap S^\varnothing\}$ is $\nra$-definable. Indeed, $S$ is $\nratf$-definable, so $S^\varnothing$ is $\nra$-definable, and by Lemma~\ref{lem:compreg}, $L_k = \{w \otimes \overline{a} \mid w \in \textsf{Comp}(\overline{a})\}$ is $\nra$-definable, so their product recognises $L_{S,k}^\tf$.
  Finally, $W_{S,k}^\tf = \lab(L_{S,k}^\tf)$, and the projection of a $\nra$ over some finite alphabet is regular~\cite{Kaminski:1994:FA:194527.194534}.
% \end{proof}

Overall, by Theorem~\ref{thm:infinite2finite},
we finally get (the complexity analysis is the same as for \ura):
\begin{thm}\label{thm:boundedNRAtf}
  For all $k\geq 0$, {\normalfont$\real(\nratf, \rt[k])$} is decidable and in {\normalfont\textsc{2ExpTime}}.
\end{thm}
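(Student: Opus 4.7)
The plan is to chain together Proposition~\ref{prop:realTF}, the projection closure of \nra, and the data-free synthesis result of Theorem~\ref{thm:reactivesynthesis}. By Proposition~\ref{prop:realTF}, a \nratf specification $S$ is realisable iff the data-free specification $W_{S,k}^{\tf}$ is realisable by a register-free transducer over input alphabet $A_\inp^\varnothing$. By Theorem~\ref{thm:reactivesynthesis}, the latter problem is in \textsc{ExpTime} in the size of a nondeterministic parity automaton for $W_{S,k}^{\tf}$. Hence it suffices to construct such an automaton of size at most singly exponential in $S$ and $k$.

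To this end, I would follow the discussion preceding the theorem: set $S^\varnothing$ to be the \nra obtained from $S$ by replacing every input test $\top$ with $\bigwedge_{r \in R_k} r^{\neq}$, so that the behaviours of $S^\varnothing$ on inputs in $A_\inp^\varnothing$ coincide with the behaviours of $S$ over $\allDiff$. Then $W_{S,k}^{\tf} = \lab(L_{S,k}^{\tf})$ where $L_{S,k}^{\tf} = \{w \otimes \overline{a} \mid w \in \Comp(\overline{a}) \cap S^\varnothing\}$. The language $L_{S,k}^{\tf}$ is \nra-definable: indeed, the compatibility language is \nra-definable by Lemma~\ref{lem:compreg} using an automaton of size $O(2^{k^2})$ with $k$ registers, and $S^\varnothing$ is \nra-definable of size $n$ with $r$ registers, so a straightforward synchronous product yields an \nra for $L_{S,k}^{\tf}$ of size $O(n \cdot 2^{k^2})$ with $O(r+k)$ registers.

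The next step is to take the label projection. The standard construction of~\cite{Kaminski:1994:FA:194527.194534}, which extends to $\omega$-words using a parity acceptance condition, enriches states with register equality types and yields a nondeterministic parity automaton recognising $\lab(L_{S,k}^{\tf}) = W_{S,k}^{\tf}$ with $O(n \cdot 4^{(r+k)^2})$ states and the same number of priorities $d$ as $S$. This bound is singly exponential in the input, so applying Theorem~\ref{thm:reactivesynthesis} to this parity automaton gives a doubly exponential algorithm overall (recall that $k$ is given in unary).

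The only subtlety I anticipate is verifying the equivalence between $\{\overline{a} \mid \Comp(\overline{a}) \cap S \cap \allDiff \neq \varnothing\}$ and $\{\overline{a} \mid \Comp(\overline{a}) \cap S^\varnothing \neq \varnothing\}$ used implicitly before the theorem; the $\Rightarrow$ direction is immediate because an \allDiff input satisfies only $\bigwedge_{r} r^{\neq}$ tests, and the $\Leftarrow$ direction follows from the fact that \nratf do not perform input tests together with Proposition~\ref{prop:originData}, since one can always realise a compatible word in $\allDiff$ by choosing pairwise fresh input data. Once this is in place, the complexity analysis mirrors that of Theorem~\ref{thm:boundedURA}: a parity game on the product of the determinisation of the above automaton with the implementation's action alphabet is solved in time polynomial in the number of states and exponential in the number of priorities, yielding the claimed \textsc{2ExpTime} bound.
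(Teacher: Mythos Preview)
Your proposal is correct and follows essentially the same route as the paper: reduce via Proposition~\ref{prop:realTF} to realisability of $W_{S,k}^{\tf}$, express the latter as the label projection of the \nra $L_{S,k}^{\tf}$ (the product of $S^\varnothing$ with the compatibility automaton of Lemma~\ref{lem:compreg}), and invoke data-free synthesis on the resulting nondeterministic parity automaton. Your treatment of the equivalence between the two descriptions of $W_{S,k}^{\tf}$ is in fact a bit more explicit than the paper's, and your complexity bookkeeping matches the paper's remark that ``the complexity analysis is the same as for \ura''.
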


\section{Synthesis and Uniformisation}%
% \leo{TODO: unify/clarify the use of
%   realisability vs synthesis (in particular in the introduction).}
\label{sec:domain}

In this section, we discuss the connection between synthesis and
uniformisation of relations, which is a more general problem: as pointed out in Section~\ref{sec:synthesisProblem},
if $S$ is realisable by a register transducer, then, in particular, it has a
total domain, i.e.\ $\projin(S) = \datawords(\inalph, \data)$, otherwise it
cannot be that $L(T) \subseteq S$ for $T$ a register transducer, since by
definition of transducers $\projin(T) = \datawords(\inalph, \data)$. However, when defining a
specification, the user might be interested only in a subset of behaviours (for
instance, s/he knows that all input data will be pairwise distinct). In the
finite alphabet setting, since the formalisms used to express specifications are
closed under complement (whether it is LTL or $\omega$-automata), it is actually
not a restriction to assume that the input domain of the specification is total:
it suffices to
complete the specification by allowing any behaviour on the input not
considered. However, since register automata are not closed under complement,
such approach is not possible here. Thus, it is relevant to generalise the
realisability problem to the case where the domain of the specification is not
total. This can be done by equipping register transducers with an acceptance
condition. It is also necessary to adapt the notion of realisability; otherwise,
any transducer accepting no words realises any specification. (since it is
always the case that $\varnothing \subseteq S$). A natural way is to consider
synthesis as a uniformisation problem~\cite{DBLP:conf/icalp/FiliotJLW16}. An (implementation)
function $f : \In \rightarrow \Out$ is said to \emph{uniformise} a (specification)
relation $R \subseteq \In \times \Out$ whenever:
\begin{enumerate}
  \item\label{itm:domain} $\dom(f) = \dom(R)$ and
  \item\label{itm:subrel} for all $i \in \dom(f), (i, f(i)) \in R$
  \end{enumerate}
  Note that constraint~\ref{itm:domain} is the main difference with the
  notion of realisability.

  In the context of reactive synthesis, where $f = f_I$ is defined from an
implementation $I$ and $R$ is given as a language of relational words, it can be
rephrased as
\begin{enumerate}
  \item $\projin(L(I)) = \projin(R)$ and
  \item for all $w_\inp \in \projin(L(I)), \interl{w_\inp}{f_I(w_\inp)} \in R$
  \end{enumerate}
  Note that such definition coincides with the one of realisability of
  Section~\ref{sec:synthesisProblem} when the class of implementations has
  total domain, because then it is equivalent to asking $L(I) \subseteq
  R$. In the following, we denote by $\unif(\mathcal{S},\mathcal{I})$ the
  uniformisation problem from specifications in $\mathcal{S}$ to implementations
  in $\mathcal{I}$.
Unfortunately, this setting is actually much harder, as shown by the next
two theorems:
\begin{thm}%
  \label{thm:UndecDomain}
  Given $S$ a specification represented by a \dra, checking whether $\projin(S) =
  \datawords(\inalph, \data)$ is undecidable.
\end{thm}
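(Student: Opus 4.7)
The plan is to reduce from the universality problem for \nra on $\omega$-data words, which is undecidable: the finite-data-word case is settled in~\cite{Neven:2004:FSM:1013560.1013562} and transfers to $\omega$-words by the standard padding argument already used in the proof of Theorem~\ref{thm:unboundedNRA}. Given an \nra $A$ over $\inalph$ with state set $Q^A$, registers $R^A$ and transition set $\delta^A$, I will construct a \dra $S$ over input alphabet $\inalph$ and output alphabet $\outalph = \delta^A$ such that $\projin(S) = L(A)$; this suffices since $\projin(S) = \datawords(\inalph, \data)$ iff $A$ is universal.

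The idea is that $S$ simulates $A$ deterministically while letting the output player supply, at each output step, which transition of $A$ must be fired. $S$ uses the registers $R^A$ of $A$ together with a scratch register $s$; its states are $Q^A$ (input) together with $Q^A \times \inalph \times 2^{R^A}$ (output, memorising the equality type of the last input data with respect to $R^A$), plus a rejecting sink. At an input step from $q \in Q^A$ on reading $(\sigma, d)$, $S$ branches deterministically on the equality type $E \subseteq R^A$ via the mutually exclusive maximally consistent conjunctions $\phi_E = \bigwedge_{r\in E} r^= \wedge \bigwedge_{r\notin E} r^{\neq}$, stores $d$ in $s$, and moves to $(q, \sigma, E)$. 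At an output step from $(q, \sigma, E)$ on reading $(t, d')$ with $t = (p, \sigma', \phi, \asgn, q') \in \delta^A$, $S$ tests whether $t$ is applicable, i.e.\ $p=q$, $\sigma'=\sigma$, and $\phi_E \Rightarrow \phi$ (this is decidable on the finite alphabet $\delta^A$). If so, the test $s^=$ triggers a move to the input state $q'$ executing the assignment $\asgn$, while $s^{\neq}$ leads to the sink; otherwise $S$ goes unconditionally to the sink. The parity of $S$ is inherited from $A$ on $Q^A$, output states get a non-dominating even colour, and the sink gets an odd colour.

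For correctness, any accepting run of $S$ on $\interl{u}{v}$ must avoid the sink, which forces the labels of $v$ to spell a valid run of $A$ on $u$, accepting by the matching colourings; hence $u \in L(A)$. Conversely, given $u \in L(A)$ with accepting run $\rho = t_1 t_2 \dots$, setting $v_j = (t_j, d_\inp^j)$ (each output data coincides with the corresponding input data) yields a run of $S$ on $\interl{u}{v}$ that faithfully mimics $\rho$ and is accepting with the same maximal parity, so $u \in \projin(S)$. The main obstacle is reconciling the determinism of $S$, and the fact that its transitions can only compare the current data with register contents rather than two registers with each other, with the need to propagate $A$'s tests, evaluated on the input data at the input step, to the assignment that $A$ executes upon firing its transition. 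This is handled by precomputing the equality type $E$ at the input step (and carrying it in the output state) and by constraining $d' = d$ via the scratch register $s$, so that $S$'s assignments of the current output data reproduce exactly $A$'s assignments of the original input data.
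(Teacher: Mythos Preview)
Your proof is correct and follows essentially the same approach as the paper: reduce from \nra universality by letting the output player pick, at each step, which transition of $A$ to fire, using an extra scratch register to carry the input data over to the output step so that $A$'s assignment can be reproduced. Your version is in fact slightly cleaner than the paper's: you branch on maximally consistent equality types $E\subseteq R^A$ at the input step (giving finitely many, mutually exclusive tests and hence a genuine \dra), whereas the paper quantifies the input transition over all $\tst\in\Tst_R$, which taken literally yields infinitely many output states and does not immediately guarantee determinism; your explicit sink and your remark on transferring finite-word universality to $\omega$-words via padding are also welcome precisions.
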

\begin{proof}
  We reduce from the universality problem of \nra, which is
  undecidable~\cite{Neven:2004:FSM:1013560.1013562}. Let $A = (\Sigma, \data, Q,
  q_0, \delta, R, c)$ be an \nra. We encode $L(A)$ as the domain of some \dra
  specification: the input transitions are the same as the transitions of the
  original automaton, but when there is some nondeterminism, its resolution is
  postponed to the corresponding output transition, whose finite label
  corresponds to the chosen transition. In the vocabulary of games, the input
  player chooses the finite input label and the equality relation of the input
  data to the registers of $A$, and the output player resolves the
  nondeterminism. Thus, we construct a \dra $D$ accepting $R(D) = \{((\sigma_1, d_1)
  (\sigma_2, d_2) \dots, (t_1, d_1) (t_2, d_2) \dots) \mid t_1 t_2 \dots$
    is a run of $A$ over $(\sigma_1, d_1) (\sigma_2, d_2) \dots\}$.

    Thus, define $D = (\Sigma \uplus \delta, \data, Q \uplus Q \times (\Sigma
    \times \Tst_R), q_0, \delta', R \uplus \{r_0\}, c')$, where $\delta'$ is
    defined as follows: for all $q \in Q$, $\sigma \in \Sigma$ and $\tst \in
    \Tst_R$, we define the input transition $q \myxrightarrow[D]{\sigma, \tst,
      \{r_0\}} (q, (\sigma, \tst))$. Then, for all $t = q
    \myxrightarrow[A]{\sigma, \tst, \asgn} q' \in \delta$, we define the output
    transition $(q, (\sigma, \tst)) \myxrightarrow[D]{t, \tst
      \wedge r_0^=,
      \asgn} q'$. Then, let $c' : q \mapsto c(q)$ and $(q,\bullet) \mapsto
    c(q)$. Such automaton is indeed deterministic, and it recognises the
    relation $R(D) = \{((\sigma_1, d_1) (\sigma_2, d_2) \dots, (t_1, d_1) (t_2,
    d_2) \dots) \mid t_1 t_2 \dots$ is a run of $A$ over $(\sigma_1, d_1)
    (\sigma_2, d_2) \dots\}$. Then, $\projin(R(D))$ is universal iff $L(A)$ is
    universal.
  \end{proof}
  Such result extends to \nra and \ura, whose \dra are a special case. Note that
  the unbounded realisability problem for $\dra$ is not reducible to deciding
  whether the domain is total: if the specification $S$ is not realisable, it is not
  possible to determine whether it is because the domain of $S$ is not total or
  because $S$ is not realisable by a sequential machine (e.g.\ $S$ asks to output
  right away a data that will only be input in the future).
% \leo{Maybe we could
%     relate the above proof with determinisability by pruning and/or good for
%     game-ness, as $D$ seems to be realisable iff $A$ is GFG.}

Then, while the uniformisation setting obviously preserves the undecidability
results from the synthesis setting, the above result allows to show that the
somehow more general uniformisation problem is undecidable. For instance, we can prove:
\begin{thm}%
  \label{thm:boundedURAFinDom}
  For all $k \geq 1$, {\normalfont$\unif(\ura,\rt[k])$} is undecidable.
\end{thm}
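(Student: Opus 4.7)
The plan is to reduce from the undecidability of the totality problem for \dra specifications, established in Theorem~\ref{thm:UndecDomain}. Starting from an \nra $A$ with undecidable universality (by~\cite{Neven:2004:FSM:1013560.1013562}), let $D$ be the \dra produced by the construction in the proof of Theorem~\ref{thm:UndecDomain}, so that $\projin(L(D)) = L(A)$. Since $\dra \subseteq \ura$, the language $S := L(D)$ is a valid \ura specification, and the goal is to prove that $S$ is uniformisable by some implementation in $\rt[k]$ if and only if $L(A) = \datawords$, yielding undecidability of $\unif(\ura, \rt[k])$ by reduction.

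For the forward direction, I would use that an implementation in $\rt[k]$ equipped with a parity acceptance condition (as motivated at the start of this section) is a deterministic register automaton on the input side, so its accepting domain $\dom(f_I)$ is \dra-recognisable. The uniformisation requirement $\dom(f_I) = \dom(S) = L(A)$ then forces $L(A)$ to be \dra-recognisable; by a careful analysis of the construction of $A$ in~\cite{Neven:2004:FSM:1013560.1013562}, $L(A)$ is \dra-recognisable only when it equals $\datawords$, yielding $L(A) = \datawords$.

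For the backward direction, if $L(A) = \datawords$, then $\dom(S) = \datawords$ and I must exhibit a complete implementation $I \in \rt[k]$ realising $L(D)$: on input $(\sigma, d)$, $I$ echoes $d$ through one of its registers and outputs the pair $(t, d)$, where $t$ is a transition of $A$ extending the currently simulated accepting run. Since $A$ is universal, at every step at least one valid $t$ exists; I would select it through a memoryless winning strategy in the parity game attached to transition selection in $A$, compiled into the finite control of $I$.

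The main obstacle lies in the backward direction: picking valid transitions of $A$ would, at face value, require discriminating between data values stored in $A$'s registers, hence seemingly demand $|R_A|$ registers in the implementation. To make the argument uniform in $k \geq 1$, I would observe that the relevant strategy depends only on the equality type of the current input datum with respect to $A$'s register valuation (a finite abstraction in the spirit of the constraints used in the proof of Lemma~\ref{lem:trim} and Proposition~\ref{prop:implDRAnew}), a piece of information that can be tracked inside the finite state of $I$. The single register available already for $k = 1$ then suffices for $I$ to echo the current input datum back on the output side, independently of the actual register count of $A$.
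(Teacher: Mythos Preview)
Your approach differs substantially from the paper's, and unfortunately both directions of your argument contain genuine gaps.

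\textbf{Backward direction.} The claim that ``the equality type of the current input datum with respect to $A$'s register valuation \dots\ can be tracked inside the finite state of $I$'' is incorrect. The constraints used in Lemma~\ref{lem:trim} and Proposition~\ref{prop:implDRAnew} record equalities \emph{among} register contents, which is indeed a finite abstraction of the valuation. But what you need here is the relation between a \emph{freshly read} datum $d$ and each register of $A$; this is not determined by the equivalence relation on $R_A$ alone---it depends on the actual values $\tau_A(r)$, which live in the infinite set $\data$. Concretely, take an \nra $A$ with two registers that stores $d_1$ in $r_1$, $d_2$ in $r_2$, and then from state $q$ has four mutually exclusive transitions $t_{00},t_{01},t_{10},t_{11}$ indexed by the truth values of $r_1^=$ and $r_2^=$, all leading to an accepting sink. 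Then $L(A)=\datawords$, yet in $D$ the output at step~3 must be exactly the $t_{ij}$ whose test is satisfied, and a $1$-register transducer cannot determine both bits. So for $k=1$ (and, by padding, for any fixed $k<|R_A|$) no $I\in\rt[k]$ uniformises $L(D)$ even though $A$ is universal.

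\textbf{Forward direction.} Your argument hinges on the claim that, for the specific \nra produced by the reduction in~\cite{Neven:2004:FSM:1013560.1013562}, $L(A)$ is \dra-recognisable only when $L(A)=\datawords$. This is asserted but not proved, and it is far from obvious: it amounts to a structural dichotomy for an infinite family of \nra languages coming from an external construction. Without it, all you get from uniformisability is that $L(A)$ is \dra-recognisable, which does not decide universality.

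\textbf{The paper's route.} The paper avoids both difficulties by reducing from \emph{emptiness} of \ura instead. Fix any unrealisable \ura specification $S_u$; given a \ura $A$ over finite words, let $S$ accept pairs $(w_1\# w_2,\, w_1\# w_2')$ with $w_1\in L(A)$ and $(w_2,w_2')\in S_u$. Then $S$ is \ura-definable, and it is uniformisable (by any $\rt[k]$, indeed by the empty function) iff its domain is empty, i.e.\ iff $L(A)=\varnothing$. This sidesteps the need for the implementation to simulate $A$ at all.
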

\begin{proof}
Consider some unrealisable \ura specification $S_u$ and the following
specification $S$ mapping $w_1\# w_2$ to $w_1 \# w'_2$ such that $(w_2,w'_2)\in
S_u$, defined only when $w_1$ is a finite data word accepted by some \ura $A$.
Clearly, $S$ is \ura-definable and realisable iff its domain is empty, i.e.
$L(A) = \varnothing$. However, emptiness of \ura is an undecidable problem.
\end{proof}

If the domain of the specification is \dra-recognisable, it is possible to
reduce the uniformisation problem to realisability, by allowing any behaviour on
the complement of the domain (which is then \dra-recognisable). However, such
property is undecidable as a direct corollary of Theorem~\ref{thm:UndecDomain}.

\section*{Conclusion}%
\label{sec:conclusion}

In this paper, we have given a picture of the decidability landscape of the
synthesis of register transducers from register automata specifications. We
studied the parity acceptance condition because of its generality, but our
results allow to reduce the synthesis problem for register automata
specifications to the one for finite automata while preserving the acceptance
condition. We have also introduced and studied test-free \nra, which do not have
the ability to test their input, but still have the power of duplicating,
removing or copying the input data to form the output. We have shown that they
allow to recover decidability in the presence of non-determinism, in the bounded
synthesis case. We leave open the unbounded case, which we conjecture to be
decidable. As future work, we want to study synthesis problems for register
automata which are able to test additional properties over the data. In
particular, allowing to compare data for an order over
$\data$~\cite{DBLP:conf/amw/BenediktLP10,DBLP:journals/mscs/FigueiraHL16} looks
promising. Note that most other natural predicates immediately yield
undecidability, e.g.\ adding +1. Another direction is to study specifications
given by logical formulae, for decidable data words logics such as two-variable
fragments of
FO~\cite{DBLP:conf/lics/BojanczykMSSD06,DBLP:journals/corr/abs-1110-1439,DBLP:conf/lics/DartoisFL18}.
Such problem is however much more challenging, as there do not exist good
correspondence between logic and automata in the realm of data words, except in
very restricted settings~\cite{DBLP:conf/csl/BenediktLP10}.

\section*{Acknowledgments}
\noindent The authors would like to thank Ayrat Khalimov for his remarks and
suggestions, which helped improve the quality of the paper. They also thank the
anonymous reviewers, who took the time to read the paper in detail and
subsequently suggested important clarifications as well as simplifications in
the proofs.

%%
%% Bibliography
%%

%% Please use bibtex,

\bibliographystyle{alpha}
\bibliography{../Bibliography}

\end{document}